%% file: Full_main10_pages.tex
\documentclass[10pt,twocolumn,twoside]{ieeeconf}

\usepackage{subcaption}
\usepackage[font=small]{caption}
\usepackage{afterpage}
\usepackage{bbold}
\usepackage{comment}
\usepackage{epstopdf}
\usepackage{subcaption}

\usepackage{color}
\usepackage{todonotes}
\usepackage{mathrsfs}
\newtheorem{remark}{\bf{Remark}}

\newcommand{\bs}{\boldsymbol}

\IEEEoverridecommandlockouts                              
\usepackage{epsfig} 
\usepackage{mathptmx} 
\usepackage{times} 
\usepackage{amsmath} 
\usepackage{amssymb}  
\usepackage{multirow}
\usepackage{array}
\usepackage{mathtools}
\usepackage{tabularx}
\usepackage{cite}
\usepackage{footnote}
\makesavenoteenv{tabular} 
\usepackage{gensymb}
\usepackage{bm} 

\usetikzlibrary{backgrounds}
\usepackage{amsmath,amssymb}
\usepackage{bm}
\usepackage{tikz}
\usetikzlibrary{arrows.meta,calc}

\usepackage{algorithm,algorithmic}
\usepackage{graphicx}
\usepackage{booktabs}
\floatname{algorithm}{Method}
\usepackage{color,soul}
\usepackage{comment}
\usepackage{relsize}
\usepackage{tikz}
\usepackage{minted}
\usepackage{xcolor}
\usepackage{listings}

\usepackage{xparse}

\NewDocumentCommand{\codeword}{v}{%
\texttt{\textcolor{black}{#1}}%
}

\newtheorem{definition}{\textbf{Definition}}

\newtheorem{Lemma}{\bf{Lemma}}
\newtheorem{Assumption}{\bf{Assumption}}
\newtheorem{Theorem}{\bf{Theorem}}

\makeatletter
\newcommand{\definetrim}[2]{%
  \define@key{Gin}{#1}[]{\setkeys{Gin}{trim=#2,clip}}%
}
\makeatother

\makeatletter
\let\NAT@parse\undefined
\makeatother
\usepackage[hidelinks]{hyperref}    

\title{\LARGE \bf
A Distributed Framework for Data-Driven Safe Coordination in Leader–Follower Networks}

\author{Mirhan Urkmez, Maryam Sharifi and Shahab Heshmati-Alamdari~\IEEEmembership{Member,~IEEE} 
\thanks{ Mirhan Urkmez and Shahab Heshmati-alamdari are with the Section of Automation \& Control, Department of Electronic Systems, Aalborg University, Denmark, Email: {\tt\small \{mu@es.aau.dk;shhe@es.aau.dk\}}. Maryam Sharifi is with ABB Corporate Research, Västerås, Sweden, Email: {\tt\small  \{maryam.sharifi@se.abb.com\}}. 
}
}

\begin{document}

\maketitle

\begin{abstract}
This paper addresses connectivity preservation in leader–follower multi-agent systems with unknown control-affine dynamics and local state information. We introduce the distributed data-driven zeroing control barrier function (3D-ZCBF) framework, which ensures the controlled invariance of safety sets by identifying derivative bounds from input–state data without requiring explicit models of high-dimensional agent dynamics. In this work, we derive the explicit, decoupled safety conditions necessary to maintain connectivity for leader–leader, and follower–follower pairings. These individual constraints, along with the leader-follower conditions, are aggregated into explicit system-wide conditions that formally guarantee the preservation of the entire communication network. Furthermore, we provide a quantitative analysis demonstrating how the size of the collected data set and the accuracy of the learned Jacobian bounds impact the feasibility of the safety certificates. The proposed conditions are implemented via a projection-based controller, and simulations confirm that these explicit 3D-ZCBF requirements effectively maintain system-level connectivity using only local, two-hop information.
\end{abstract}

\section{Introduction}\label{sec:introduction}
Multi-agent systems (MAS) are increasingly being studied and used in diverse autonomous applications, including cooperative swarm robotics and connected vehicle environments, where collective behavior is essential for mission success \cite{olfati2007consensus, brambilla2013swarm}. In these domains, agents must achieve coordinated objectives, e.g.,  formation keeping \cite{egerstedt2001formation}, coverage \cite{atincc2014supervised}, or target tracking \cite{xia2021multi}, while ensuring that fundamental safety constraints such as collision avoidance and connectivity maintenance are never violated. Among various coordination architectures, leader–follower structures are particularly common, where leaders guide the collective motion of followers through local interactions and allow for implementing a scalable framework \cite{hu2010distributed}, \cite{roque2020decentralized}. The distributed nature of 
such systems, coupled with communication delays, model uncertainty, and dynamic environments, makes 
design of safe control strategies inherently challenging, as even a single local violation can precipitate cascading network-level failures\cite{wang2017safety}. {This challenge is further compounded when the agent or leader–follower dynamics are unknown or only partially known, since classical model-based control and safety certification methods cannot be directly applied. At the same time, agents typically have access only to local neighbor information, which further increases the challenge for designing safe controllers. There are some previous work that consider several effects of uncertainties such as external disturbances \cite{cao2015leader}, model uncertainties \cite{zhang2020robust}, unknown communication delays \cite{tian2012high}. However, they intrinsically rely on a dynamic model structure of the agents which is affected by an unknown parameter.
In this work, we focus on connectivity maintenance in unknown leader–follower multi-agent systems, where each agent has only local state information and inputs must be computed in a distributed manner. Connectivity preservation \cite{griparic2022consensus} is a fundamental safety requirement that ensures reliable coordination and information sharing among agents. It requires that the communication network remain connected over time; that is, all active communication links between agents are maintained. Equivalently, inter-agent distances must remain within the maximum communication range, allowing each agent to continuously share information with its neighbors and adhere to coordinated commands. Ensuring this in practice calls for systematic safety-oriented control strategies. }

A promising approach to address safety-critical requirement for autonomous systems, including those requiring connectivity preservation, is the use of Control Barrier Functions (CBFs), which define the safe set as the nonnegative level set of a continuously differentiable function and provides conditions for the forward invariance enforcement of the safe set \cite{ames2016control}. Since their introduction, CBFs have been widely applied across a range of domains, including autonomous systems \cite{fisac2018general,heshmati2024control}, aerial  \cite{huang2021homography,zhou2022iblf}, and ground robots \cite{wang2016multi,wu2022continuous}, as they provide a computationally tractable framework with formal safety guarantees that minimally alters nominal control laws. In multi-agent contexts, distributed formula+tions of CBFs enable each agent to enforce local safety constraints while cooperating toward common goals \cite{tan2021distributed}. As previously mentioned, leader–follower architectures are a prominent paradigm for MAS coordination. They simplify cooperative control by delegating global mission planning to a small set of leaders while followers coordinate through local interactions, which improves scalability and reduces communication load \cite{Maryam_TCNS}. However, these architectures also introduce safety challenges: followers often have only partial information about leader and neighbor dynamics, agents may be heterogeneous, and safety must be preserved under sensing noise, delays, and communication losses \cite{panagou2015distributed}. Robust safety in these conditions calls for control methods that remain effective with uncertain or incomplete neighbor dynamics \cite{usevitch2019resilient}.

Recent research has begun addressing this challenge through robust and data-driven extensions of CBFs. A robust formulation have been developed to account for modeling errors and interaction uncertainty in multi-agent settings by learning uncertainty bounds \cite{cheng2020safe}. A similar work \cite{kim2025robust} provided theoretical guarantees of safety under limited model knowledge for a high relative degree system with and unknown, dynamically evolving obstacles. Complementary works have explored data-driven and learning-based formulations of CBFs. These include hybrid CBFs learned directly from data or demonstrations \cite{lindemann2021learning}, as well as graph-based neural CBF representations that enable distributed collision avoidance using only local observations in large-scale multi-agent systems \cite{zhang2023neural}. Measurement uncertainty in dynamic environments has recently been addressed through distributionally robust CBF constructions \cite{long2025sensor}.

Most CBF-based approaches for multi-agent systems assume that each agent has access to either a model of its neighbors or reliable estimates of their states. In real environments, however, agents often receive only raw measurements such as relative positions, velocities, or communicated control inputs, without knowing the underlying dynamics that generate them. Under these limited sensing and communication conditions, enforcing interaction-level properties becomes difficult. This is especially true for connectivity enforcement, which is essential for coordination and information exchange. These limitations point to the need for safety mechanisms that can operate directly on observed data while retaining the safety guarantees associated with CBF-based control, particularly in leader–follower settings with uncertain or unknown neighbor behavior.
\subsection{Contribution}

To address connectivity maintenance in unknown leader–follower multi-agent systems, we introduce the distributed data-driven zeroing CBF (3D-ZCBF)
framework. This framework provides the basis for controlled invariance of safety sets by identifying bounds on CBF time derivatives directly from collected input–state data, rather than requiring an explicit model of the high-dimensional agent and neighbor dynamics. By learning these derivative bounds, the method provides a way to enforce safety constraints in systems where the underlying dynamics are unknown. In this work, we provide the explicit conditions required to maintain connectivity across the network. We establish the precise requirements for preserving links between different agent pairs and aggregate these into a unified, system-wide safety guarantee. Additionally, we provide results showing how the size of the data set and the tightness of the learned bounds impact the overall results.

We summarize our main contributions as follows:
\begin{itemize}

\item Derivation of explicit, decoupled 3D-ZCBF conditions for all inter-agent edge types, ensuring that connectivity constraints are locally enforceable by individual agents.

\item Formulation of system-level conditions for connectivity preservation that aggregate all individual distributed constraints into a collective safety guarantee for the entire network.

\item A quantitative analysis of how the size of the input–state data set and the tightness of the learned Jacobian bounds affect the resulting feasibility of the safety certificates.

\end{itemize}

\section{Preliminaries}
\subsection{Notations}
The set of integers up to $n$ is denoted by $\mathbb{Z}^+_n$. A continuous function $\alpha :[0,a)\to \mathbb{R}_+$ with $a\in\mathbb{R}_+$ is a class $\mathcal{K}$ function if it is strictly monotonically increasing and $\alpha(0)=0$, and is of class $\mathcal{K}_\infty$ if it is a class $\mathcal{K}$ function with $a=\infty$ and $\lim_{x\to\infty}\alpha(x)=\infty$.
For a vector/matrix $P$, $P^+\triangleq \max\{P,0\}$ and $P^-\triangleq P^+ - P$ with element-wise maximum. Moreover, vector inequalities are element-wise. Given a set of vectors $v_i \in \mathbb{R}^n, i\in I$, the notation $(v_i)_{i \in I}$ denotes the \emph{stacked vector} obtained by concatenating the individual vectors $v_i$ for all $i \in I$ in increasing order of their indices and $|I|$ is the cardinality of set $I$.

\subsection{Communication Graph}
A communication graph $G$ is denoted by $G = ( V, E)$, where $V = \{ 0,...,M-1\} $ is a finite nonempty node set, and $E \subseteq V \times V$ is the edge set that represents the communication links between the nodes, with $M$ being the number of nodes. The subsets of nodes corresponding to leaders and followers are denoted by $V_l \subseteq V$ and $V_f \subseteq V$, respectively.
The sets of edges connecting leaders to leaders, leaders to followers, and followers to followers are denoted by $E_{ll} \subseteq E$, $E_{lf} \subseteq E$, and $E_{ff} \subseteq E$, respectively, where $E_{ll} = \{ e_{ij} \in E \mid i,j \in V_l \}$, $E_{lf} = \{ e_{ij} \in E \mid i \in V_l, j \in V_f \}$, and $E_{ff} = \{ e_{ij} \in E \mid i,j \in V_f \}$, where $e_{ij} \in E$ denotes an edge from node $i$ to node $j$.
The edge $e_{ij}\in E$ shows that node $i$ obtains information from node $j$ and vice versa. In other words, agents $i$ and $j$ are neighbors. The neighbor set of node $i$ is defined as ${\mathcal{N}_i} = \{ j|e_{ij} \in E\} $.   
    The set of neighbors shared by two agents $k$ and $j$ is defined as
    $\mathcal{N}_{kj} := \mathcal{N}_k \cap \mathcal{N}_j$.
    The set of neighbors unique to agent $k$, excluding $j$, is defined as
    $\mathcal{N}_{k \setminus j} := \mathcal{N}_k \setminus (\mathcal{N}_{kj} \cup \{j\})$.

\subsection{Control Barrier Functions}\label{sec:cbf}
\begin{definition}[Zeroing Control Barrier Function \cite{ames2016control}]\label{def:ZCBF}
Consider the control-affine system
\begin{equation}\label{sysZCBF}
    \dot{\bs x} = f(\bs x) + g(\bs x)\,\bs u,
\end{equation}
where $\bs x \in X \subset \mathbb{R}^n$ is the state, $\bs u \in U \subset \mathbb{R}^m$ is the control input, the set $U$ is compact, and functions $f(\cdot)$ and $g(\cdot)$ are locally Lipschitz. Let $h: X \subset \mathbb{R}^n \rightarrow \mathbb{R}$ be continuously differentiable function and define the safe set $S\triangleq \{\bs x \in X \mid h(\bs x) \ge 0\}$. The function $h$ is a zeroing control barrier function (ZCBF) on $X$ if there exists an extended class-$\mathcal{K}_\infty$ function $\alpha$ such that
\begin{equation}\label{eq:ZCBFcond}
    \sup_{\bs u \in U} \Big\{ \nabla h(\bs{x})^{\top}(f(\bs{x})+g(\bs{x})\bs{u}) + \alpha(h(\bs{x})) \Big\} \ge 0,
    \quad \forall\, \bs x \in X.
\end{equation}
\end{definition}

\section{Problem Formulation}
Consider a leader-follower multi-agent system with $M$ agents in an $n$-dimensional environment. Let the state and input vectors of all agents be, respectively, stacked as
\begin{align}  \label{eq:stacked}
\bs{x} = (\bs{x}_i)_{i \in V} \!\in \!X \!\subset\! \mathbb{R}^{nM}\!\!\!\!\!,\! \quad\!\!
\bs{u}\! = \!(\bs{u}_i)_{i \in V}\! \in \!U \!\subset \!\mathbb{R}^{mM}\!\!\!\!,\!\!
\end{align}
where $\bs{x}_i \in  \mathbb{R}^n$ and $\bs{u}_i \in  \mathbb{R}^m$ for all $i \in V$ and state and input sets $X, U$ are bounded. For a given state vector $\bs{x} \in X$ and a subset of agents $I \subseteq \{1, \dots, M\}$, define the stacked state vector of the agents in $I$ as
\begin{align*}
    &\bs{x}_I := (\bs{x}_i)_{i \in I} \in \mathbb{R}^{n|I|}.
\end{align*}
Similarly, define the projection of $X \subseteq \mathbb{R}^{nM}$ onto the states of the agents in $I$ as
\begin{align*}
&X_I := \{ \bs{x}_I \mid \bs{x} \in X \} \subseteq \mathbb{R}^{n|I|}.
\end{align*}
When $I=\{i\} $ for some $i\in V $, we simply denote $X_I$ with $X_i$. The same stacking and projection notation applies to the inputs, i.e.,
\begin{align*}
\bs{u}_I := (\bs{u}_i)_{i \in I} \in \mathbb{R}^{m|I|}, \quad
U_I := \{ \bs{u}_I \mid \bs{u} \in U \} \subseteq \mathbb{R}^{m|I|}.
\end{align*}


The leader-follower multi-agent dynamics considered in this work can be written as
\begin{subequations}\label{agentZCBF}
    \begin{align}
    \dot{\bs{x}}_j &= f_j(\bs{x}_j, \bs{x}_{\mathcal{N}_j}) + g_j(\bs{x}_j)\bs{u}_j, \quad j  \in V_l, \label{agentZCBF_a}\\
    \dot{\bs{x}}_j &= f_j(\bs{x}_j, \bs{x}_{\mathcal{N}_j}), \quad j \in V_f,\label{agentZCBF_b}
    \end{align}
\end{subequations}
where the functions $f_j(\cdot), j\in V$ and $ g_j(\cdot), j \in V_l$ are differentiable with respect to all their arguments, and their Jacobians are bounded on the domains $X \subset \mathbb{R}^{nM}$ and $U \subset \mathbb{R}^{mM}$.

In this setup, the leading agents’ movements are controlled by external inputs $\bs{u}_j, j  \in V_l$,  the remaining following agents simply follow their natural dynamics given in \eqref{agentZCBF_b}, maintaining coordination through their embedded controller. The goal is to navigate the agents towards the destination while maintaining connectivity among them in a distributed manner, meaning that each leader agent $j\in V_l$ computes its control input $\bs{u}_j$ based on the information available it. Each agent communicates with neighbors $k\in \mathcal{N}_j$ within a range $d_{max}$. Maintaining connectivity means preserving the  communication links, that is, ensuring that the distance between any two neighbor agents
$k,j\in V$, $ e_{kj}\in E$  remains within the communication range for all time
\begin{align}\label{eq:connectivity}
\| \bs{x}_k(t) - \bs{x}_j(t) \|_2 \le d_{max}, \quad \forall e_{kj} \in E,~ t \ge 0.
\end{align}

In this setup, leaders enforce connectivity through their control inputs. 
Consider a potential edge $e_{kj}\in E$ connecting a follower $k\in V_f$ and a leader $j\in V_l$. 
The distance dynamics on this edge are given by $\frac{d}{dt}\|\bs{x}_k-\bs{x}_j\|_2=\frac{(\bs x_k-\bs x_j)^\top(\dot{\bs x}_k-\dot{\bs x}_j)}{\|\bs x_k-\bs x_j\|_2}$ and depend on the states of the agents in $\mathcal{N}_k$ and $\mathcal{N}_j$ considering the \eqref{agentZCBF}.  Consequently, in order to enforce connectivity, leaders require two-hop neighborhood information, since the time derivative of the inter-agent distance on each edge depends not only on the states of the two incident agents but also on the states of their neighbors. This leads to the following assumption.
\begin{Assumption}[Leader information availability]\label{ass:neighborinfo}
    Leader agents $j\in V_l$, are assumed to have access to their two-hop neighborhood, i.e., their own state, the states of all agents $ k \in \mathcal{N}_j $, and the states of all agents in $ \bigcup_{k \in \mathcal{N}_j} \mathcal{N}_k $.  We define the corresponding \emph{local information index set} as
        $I_j^{loc} := \{ j \} \;\cup\; \mathcal{N}_j \;\cup\; \bigcup_{k \in \mathcal{N}_j} \mathcal{N}_k$,
    so that the vector of states available to agent $j$ are $\mathbf{x}_{I_j^{loc}} \in X_{I_j^{loc}}$.
\end{Assumption}
Based on the information available to them as specified in this assumption, each leader agent computes its control input to maintain connectivity with its neighbors.
To enforce connectivity, we associate each edge $e_{kj} \in E$, with a set of ZCBF functions
\begin{align*}
\{ h_r^{kj} : X_{I_{h_r^{kj}}} \to \mathbb{R} \}_{r=1}^{n_{kj}},
\end{align*}
whose corresponding safe sets will be used to maintain the connectivity constraint~\eqref{eq:connectivity}, where index set $I_{h_r^{kj}} \subseteq V$  denotes agents whose states appear in  $h_r^{kj}$, and $n_{kj} \in \mathbb{Z}^+$ is the total number of candidate CBFs associated with edge $e_{kj}$. We will show in subsequent analysis how to define the $n_{kj}$ for different edge class in Subsections~\ref{sec:lf}, \ref{sec:ll}, and \ref{sec:ff}.
Each candidate ZCBF $h_r^{kj}$ defines a safe set
\begin{align}\label{eq:safetyset}
S_r^{kj} = \{ \bs{x} \in X \mid h_r^{kj}(\bs{x}_{I_{h_r^{kj}}}) \ge 0 \},
\end{align}
representing the states in which the connectivity constraint between agents $k$ and $j$ is satisfied.  

In this setup, ensuring connectivity then reduces to designing the set of ZCBF functions $h_r^{kj}$ for each edge $e_{kj} \in E$ such that their corresponding safe sets $S_r^{kj}$ are forward invariant and within the intersection of safe sets $\bigcap_{r=1}^{n_{kj}} S_r^{kj},
$ the connectivity condition \eqref{eq:connectivity} is satisfied. In particular, if the initial relative positions of all connected agent pairs lie within their respective safe sets and these sets remain forward invariant, then the connectivity between neighboring agents is maintained for all future times.

In this setting, the dynamics of the agents are not explicitly known, so the functions $f_j$, $f_r$, and $g_j$ cannot be used directly. As a result, the standard ZCBF condition in \eqref{eq:ZCBFcond} cannot be applied in the usual way. Instead, implementation relies on previously collected input–state data for each candidate ZCBF to realize a data-driven version of the framework.
\begin{Assumption}[Data Availability]\label{ass_data}
    For each edge $e_{kj} \in E$ of the system, there is a non-empty finite set of continuously differentiable ZCBF candidates $\{h_r^{kj}: X_{I_{h_r^{kj}}} \rightarrow \mathbb{R}\}_{r=1}^{n_{kj}}$
    where $n_{kj}$ is the number of ZCBF candidates for edge $e_{kj}\in E$ and index set  $I_{h_r^{kj}} \subset V$ denotes agents whose states appear in  $h_r^{kj}$.
    
    Moreover, for each $h_r^{kj}$, a corresponding prior input–output dataset 
    \begin{align*}
    \mathcal{D}_r^{kj} = \{ (\dot{h}_r^{kj}(\bs{x}_{i,I_{\dot{h}_r^{kj}}}, \bs{u}_{i,I_{h_r^{kj}}}), \bs{x}_{i,I_{\dot{h}_r^{kj}}}, \bs{u}_{i,I_{h_r^{kj}}}) \}_{i=1}^N,\: r = 1,\dots,n_{kj}.
    \end{align*}
     is available with $\bs{x}_{i,I_{\dot{h}_r^{kj}}} \in X_{I_{\dot{h}_r^{kj}}}$, and $ \bs{u}_{i,I_{h_r^{kj}}} \in U_{I_{h_r^{kj}}}$. Here, $I_{\dot{h}_r^{kj}}$ denotes the index set of agents whose states affect the time derivative $\dot{h}_r^{kj}$. For brevity each 
    $\dot{h}_r^{kj}(\bs{x}_{i,I_{\dot{h}_r^{kj}}}, \bs{u}_{i,I_{h_r^{kj}}})$ will be denoted by $\dot{h}^{kj}_{r,i}$. 
\end{Assumption}
Note that the control inputs affecting $\dot{h}_r^{kj}$ can only be those of the agents appearing in $h_r^{kj}$, while the states influencing $\dot{h}_r^{kj}$ may include additional agents due to couplings in the dynamics \eqref{agentZCBF}; accordingly, in the notation $\dot{h}_r^{kj}(\bs{x}_{i,I_{\dot{h}_r^{kj}}}, \bs{u}_{i,I_{h_r^{kj}}})$, the input index set is taken from $h_r^{kj}$, and a separate index set $I_{\dot{h}_r^{kj}}$ is introduced for states.

Rather than trying to learn the full $n$-dimensional agent dynamics $\dot{\bs{x}}_j$ for $j = 1,\dots,M$, this work focuses on learning the one-dimensional functions $\dot{h}_r^{kj}$ directly from the dataset, under Assumption~\ref{ass_data}. Restricting attention to $\dot{h}_r^{kj}$ lowers the dimensionality of the learning task, which is especially helpful in multi-agent systems with tightly coupled, high-dimensional dynamics. Because the true mappings $\dot{h}_r^{kj}(\bs{x}_{i,I_{\dot{h}_r^{kj}}}, \bs{u}_{i,I_{h_r^{kj}}})$ are unknown, the standard ZCBF condition in \eqref{eq:ZCBFcond} cannot be applied directly. To address this, an over-approximation of these functions is learned from the available data, with explicit quantification of the generalization error. Following \cite{jin2023robust}, a non-parametric learning scheme is used, supported by a continuity assumption.
\begin{Assumption}\label{lipschitz_cbf}
    For all agents $e_{kj}\in E$ and $r=1\cdots n_{kj}$, the functions $\dot{h}_r^{kj} : X_{I_{\dot{h}_r^{kj}}} \times U_{I_{h_r^{kj}}} \rightarrow \mathbb{R}$  are  differentiable with respect to $\bs{x}_{I_{\dot{h}_r^{kj}}}$ and $\bs{u}_{I_{h_r^{kj}}}$, and their Jacobians  are globally bounded with known finite-valued matrices $\underline{J}^{kj}_{r}, \overline{J}^{kj}_{r} \in \mathbb{R}^{(n+m)|I_{\dot{h}_r^{kj}}|}$, i.e.,
    \begin{align*}
    \underline{J}^{kj}_{r} \le \nabla\dot{h}_r^{kj}(\bs{x}_{I_{\dot{h}_r^{kj}}},\bs{u}_{I_{h_r^{kj}}}) \le \overline{J}^{kj}_{r}, 
    \end{align*}
    for all $\bs{x} \in X \subset \mathbb{R}^{nM}$ and $\bs{u} \in U \subset \mathbb{R}^{mM}$, where the inequalities are interpreted componentwise.
\end{Assumption}
Note that a sufficient condition for Assumption \ref{lipschitz_cbf} to hold for $\dot{h}_r^{kj}$ is that they also hold for functions $h_r^{kj}$, $f_j$, and $g_j$. We can now state the problem that is addressed in this paper formally. 

\textbf{Problem:} Consider an unknown multi-agent system of the form \eqref{agentZCBF}. The goal of the \emph{distributed connectivity maintenance} problem is to ensure that the distance between any two neighboring agents $e_{kj} \in E$ always remains below the maximum communication distance $d_{\max}$.  The task is to design candidate CBFs $\{ h_r^{kj} : X_{I_{h_r^{kj}}} \to \mathbb{R} \}_{r=1}^{n_{kj}}$ satisfying Assumption~\ref{lipschitz_cbf} for each edge $e_{kj}\in E$, such that within their associated safety sets $\{ S_r^{kj} : X_{I_{h_r^{kj}}} \to \mathbb{R} \}_{r=1}^{n_{kj}}$ the connectivity condition is enforced, and the invariance of these sets can be guaranteed in a distributed data-driven manner. Specifically, each leader $j\in V_l$ computes its control input using the available datasets satisfying Assumption~\ref{ass_data} and the two-hop neighborhood information available to it as specified in Assumption~\ref{ass:neighborinfo} without knowledge of the control inputs of other leaders and the combined actions of all leaders ensure that the relevant safety sets remain forward invariant. 
\section{Main Results}
The first step is to introduce the 3D-ZCBF framework for the unknown multi-agent system in \eqref{agentZCBF}. This framework allows CBFs to be defined for various types of edges and provides a way to formulate the associated safety conditions.
\begin{definition}[3D-ZCBF] \label{def:cbfcentral} 
    For an unknown control-affine multi-agent system~\eqref{agentZCBF}, a continuously differentiable function $h_r^{kj}: X_{I_{h_r^{kj}}} \rightarrow \mathbb{R}$ whose time derivative $\dot{h}_r^{kj}$ satisfies Assumption~\ref{lipschitz_cbf} with an input-output dataset $\mathcal{D}^{kj}_r$ satisfying Assumption~\ref{ass_data},  is called a \emph{bounded Jacobian robust distributed data-driven control barrier function (3D-ZCBF)} for the safety set $S_r^{kj}$ defined in \eqref{eq:safetyset}, if the followings hold:
    \begin{enumerate}
        \item \textbf{Data-driven condition:} There exists a class $\mathcal{K}_{\infty}$ function $\alpha(\cdot)$ and a data index $i^{\bs{x}}_{kj}\in\mathbb{Z}^+_N$ such that
        \begin{align}\label{eq:cbfdd_cond}
        &\sup_{\bs{u}\in U}  \dot{h}_{r,i^{\bs{x}}_{kj}}^{kj} + \underline{J}^{kj}_{r,\bs{x}}  \Delta_{\bs{x}_{i^{\bs{x}}_{kj},I_{\dot{h}_r^{kj}}}}^+ - \overline{J}^{kj}_{r,\bs{x}} \Delta_ {\bs{x}_{i^{\bs{x}}_{kj},I_{\dot{h}_r^{kj}}}}^- 
        + \underline{J}^{kj}_{r,\bs{x}} \Delta_ {\bs{u}_{i^{\bs{x}}_{kj},I_{h_r^{kj}}}}^+\nonumber\\ 
        &- \overline{J}^{kj}_{r,\bs{u}} \Delta_ {\bs{u}_{i^{\bs{x}}_{kj},I_{h_r^{kj}}}}^-
        \ge -\alpha\big(h(\bs{x}_{I_{h_r^{kj}}})\big),
    \end{align}
     for all $\bs{x} \in X$ where $\Delta_{\bs{x}_{i^{\bs{x}}_{kj},I_{\dot{h}_r^{kj}}}} := \bs{x}_{I_{\dot{h}_r^{kj}}} - \bs{x}_{i^{\bs{x}}_{kj},I_{\dot{h}_r^{kj}}}$ and $\Delta_ {\bs{u}_{i^{\bs{x}}_{kj},I_{h^{kj}_{1}}}} := \bs{u}_{I_{h_r^{kj}}} - \bs{u}_{i^{\bs{x}}_{kj},I_{h_r^{kj}}}$.
      \item \textbf{Distributed input-decoupled condition:} There exist functions
    \begin{align*}
        \phi_{\ell,r}^{kj}: X_{I_{\ell}^{\rm loc}} \times U_{\ell} \times \mathbb{Z}^+_N \to \mathbb{R}, \; \psi_{\ell,r}^{kj}: X_{I_{\ell}^{\rm loc}} \to \mathbb{R}, \; {\ell} \in I_{h_r^{kj}}\cap V_l,
    \end{align*}
    such that the original condition \eqref{eq:cbfdd_cond} can be equivalently expressed as a sum of local terms:
     \begin{align}
        & \sum_{{\ell} \in I_{h_r^{kj}}\cap V_l} \phi_{\ell,r}^{kj}(\mathbf{x}_{I_{\ell}^{\rm loc}}, u_{\ell},i^{\bs{x}}_{kj})=  \dot{h}_{r,i^{\bs{x}}_{kj}}^{kj} + \underline{J}^{kj}_{r,\bs{x}}  \Delta_{\bs{x}_{i^{\bs{x}}_{kj},I_{\dot{h}_r^{kj}}}}^+ \nonumber \\
        &- \overline{J}^{kj}_{r,\bs{x}} \Delta_ {\bs{x}_{i^{\bs{x}}_{kj},I_{\dot{h}_r^{kj}}}}^- 
        + \underline{J}^{kj}_{r,\bs{x}} \Delta_ {\bs{u}_{i^{\bs{x}}_{kj},I_{h_r^{kj}}}}^+ 
        - \overline{J}^{kj}_{r,\bs{u}} \Delta_ {\bs{u}_{i^{\bs{x}}_{kj},I_{h_r^{kj}}}}^-,\nonumber \\
        &\sum_{{\ell} \in I_{h_r^{kj}}\cap V_l} \psi_{\ell,r}^{kj}(\mathbf{x}_{I_{\ell}^{\rm loc}})=\alpha\big(h(\bs{x}_{I_{h_r^{kj}}})\big),
    \end{align}
    Consequently, the 3D-ZCBF can be enforced in a distributed manner by each leader agent independently:
    \begin{align}\label{eq:decoupledZCBF}
        \sup_{u_{\ell} \in U_{\ell}} \phi_{\ell,r}^{kj}(\mathbf{x}_{I_{\ell}^{\rm loc}}, u_{\ell},i^{\bs{x}}_{kj}) \ge -\psi_{\ell,r}^{kj}(\mathbf{x}_{I_{\ell}^{\rm loc}}), \quad \forall {\ell} \in I_{h_r^{kj}}\cap V_l.
    \end{align}
    \end{enumerate}
The corresponding safe-input set is
    \begin{align*}
       & K_{S_r^{kj},i^{\bs{x}}_{kj}}(\bs{x}) := U_{S_r^{kj},^*_{kj}}(\bs{x}), \nonumber \\
        &U_{S_r^{kj},i^{\bs{x}}_{kj}}(\bs{x}) := \{ \bs{u} \in U \;|\; 
        \dot{h}_{r,i^{\bs{x}}_{kj}}^{kj} + \underline{J}^{kj}_{r,\bs{x}}  \Delta_{\bs{x}_{i^{\bs{x}}_{kj},I_{\dot{h}_r^{kj}}}}^+ - \overline{J}^{kj}_{e,\bs{x}} \Delta_ {\bs{x}_{i^{\bs{x}}_{kj},I_{\dot{h}_r^{kj}}}}^- \\
        &+ \underline{J}^{kj}_{r,\bs{u}} \Delta_ {\bs{u}_{i^{\bs{x}}_{kj},I_{h_r^{kj}}}}^+ 
        - \overline{J}^{kj}_{r,\bs{u}} \Delta_  {\bs{u}_{i^{\bs{x}}_{kj},I_{h_r^{kj}}}}^- 
        \ge -\alpha\big(h(\bs{x}_{I_{h_r^{kj}}})\big) \}.
    \end{align*}
\end{definition}
The data-driven ZCBF condition \eqref{eq:cbfdd_cond} applies the framework in \cite{jin2023robust} to multi-agent systems with unknown dynamics, relying entirely on the data at hand to guarantee safety. The distributed form, given in \eqref{eq:decoupledZCBF}, lets each leader enforce safety with information that is locally accessible, so no central coordination is needed.

Note that \cite{jin2023robust} formulates the data-driven condition for CBFs as a maximum over the entire dataset, rather than using a single data index. In practice, this maximization is implemented by solving a separate optimization for each data point to obtain the corresponding optimal input, and then selecting among these candidates that achieves the best value of a cost function (Equation 12 in \cite{jin2023robust}). Since this is already computationally intensive for a single ZCBF \cite{jin2023robust}, it becomes prohibitive when multiple CBFs are involved, at least one for each edge. Moreover, in a leader-follower multi-agent setup, where each leader computes its input independently without knowledge of other leaders’ inputs, dataset-maximization is not directly applicable, as it requires a joint optimization over all agents’ inputs. To overcome both the computational and distributed implementation challenges, we adopt a single data index definition. However, this condition restricts the feasible input set more than a dataset-maximization approach, since the left-hand side of \eqref{eq:cbfdd_cond} is evaluated at a single data point rather than the maximal value over the entire dataset. Now, we present the theorem establishing that any ZCBF candidate $h_r^{kj}$ satisfying \eqref{eq:cbfdd_cond} renders its safety set $S_r^{kj}$ robustly control invariant for the system~\eqref{agentZCBF}.
\begin{Theorem}\label{th:centralcbf}
    For the unknown control-affine multi-agent system~\eqref{agentZCBF}  with a 3D-ZCBF $h_r^{kj}: X_{I_{h_r^{kj}}} \rightarrow \mathbb{R}$, as defined in Definition~\ref{def:cbfcentral}, and its associated safety set $S_r^{kj}$, any controller $\bs{u} \in K_{S_r^{kj},i^{\bs{x}}_{kj}}(\bs{x})$, where $i^{\bs{x}}_{kj}$ is an index satisfying \eqref{eq:cbfdd_cond}, renders the set $S_r^{kj}$ \emph{robustly controlled invariant}.
\end{Theorem}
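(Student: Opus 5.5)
The plan is to show that the data-driven expression on the left-hand side of \eqref{eq:cbfdd_cond} is a pointwise lower bound on the true derivative $\dot h_r^{kj}$. Then any input in $K_{S_r^{kj},i^{\bs{x}}_{kj}}(\bs{x})$ makes the true derivative satisfy the standard ZCBF inequality $\dot h_r^{kj}\ge -\alpha(h_r^{kj})$. Forward invariance of $S_r^{kj}$ then follows from the usual comparison-lemma argument of \cite{ames2016control}.

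\textbf{Step 1: Jacobian-based lower bound.} Fix the data index $i=i^{\bs{x}}_{kj}$ and write $z=(\bs{x}_{I_{\dot h_r^{kj}}},\bs{u}_{I_{h_r^{kj}}})$ and $z_i$ for the corresponding data point. By Assumption~\ref{lipschitz_cbf}, $\dot h_r^{kj}$ is differentiable. The mean value theorem along the segment from $z_i$ to $z$ gives $\dot h_r^{kj}(z)=\dot h_{r,i}^{kj}+\int_0^1\nabla\dot h_r^{kj}(z_i+s(z-z_i))\,ds\,(z-z_i)$. We need the segment to stay in the domain where the bounds hold. If $X\times U$ is not convex, I would instead use the fact that the Jacobian bounds are stated globally.

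Write the averaged Jacobian as $J$, so that $\underline J^{kj}_r\le J\le\overline J^{kj}_r$. Split $\Delta=z-z_i=\Delta^+-\Delta^-$ with $\Delta^\pm\ge0$. Then componentwise $J\Delta=J\Delta^+-J\Delta^-\ge \underline J\Delta^+-\overline J\Delta^-$. Partitioning $J$ into its state block and input block yields exactly the left-hand side of \eqref{eq:cbfdd_cond}. I read the typo $\underline J_{r,\bs{x}}\Delta_{\bs u}^+$ as $\underline J_{r,\bs u}\Delta_{\bs u}^+$, consistent with the definition of $U_{S_r^{kj},i}$. So $\dot h_r^{kj}(\bs{x},\bs{u})\ge L_i(\bs{x},\bs{u})$ for all $(\bs{x},\bs{u})$.

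\textbf{Step 2: Closed-loop inequality.} For $\bs{u}\in K_{S_r^{kj},i}(\bs{x})$ we have $L_i(\bs{x},\bs{u})\ge-\alpha(h_r^{kj})$, hence $\dot h_r^{kj}\ge-\alpha(h_r^{kj})$ along closed-loop trajectories. Nonemptiness of $K$ comes from the supremum condition in Definition~\ref{def:cbfcentral} together with compactness of $U$: the left-hand side is continuous in $\bs u$ (piecewise linear), so the supremum is attained. The comparison lemma with $\dot y=-\alpha(y)$, $y(0)=h_r^{kj}(\bs{x}(0))\ge0$, gives $h_r^{kj}(\bs{x}(t))\ge y(t)\ge0$. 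Hence $S_r^{kj}$ is forward invariant. It is robustly so, because the bound holds for every admissible dynamics consistent with the Jacobian bounds and the data.

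\textbf{Main obstacle.} The main obstacle is regularity. Existence and uniqueness of closed-loop solutions requires a locally Lipschitz feedback selection from $K$. The index $i^{\bs{x}}_{kj}$ may also switch with $\bs{x}$, which makes the feedback potentially discontinuous. I would handle this by assuming a locally Lipschitz selection exists, or by treating solutions in the Carathéodory sense. The lower bound $\dot h\ge-\alpha(h)$ holds for a.e.\ $t$ regardless of which index is active, so the comparison argument survives index switching.
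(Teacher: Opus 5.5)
Your proposal is correct and follows essentially the same route as the paper. Both arguments use a mean-value/Jacobian-interval lower bound on $\dot h_r^{kj}$ around the data point $i^{\bs{x}}_{kj}$, where your $\Delta^{\pm}$ splitting is exactly the content of Proposition~2 of \cite{jin2022data} that the paper invokes, and then combine it with \eqref{eq:cbfdd_cond} to recover the standard ZCBF inequality. Your version is slightly more careful than the paper's in a few places. You apply the bound pointwise to the chosen $\bs u\in K_{S_r^{kj},i^{\bs{x}}_{kj}}(\bs{x})$ and run the comparison argument directly, whereas the paper only concludes the supremum condition and cites standard ZCBF theory. You also correctly fix the typo $\underline J_{r,\bs x}\to\underline J_{r,\bs u}$ and address nonemptiness of $K$ and the regularity/index-switching issue, all of which the paper leaves implicit.
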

\begin{proof}
    For any  $(\bs{x},\bs{u}) \in X \times U$, consider the 3D-ZCBF  time derivative
    $f(\bs{x},\bs{u}) := \dot{h}_r^{kj}(\bs{x}_{I_{\dot{h}_r^{kj}}},\bs{u}_{I_{h_r^{kj}}})$.
    By Assumption~\ref{lipschitz_cbf}, $f(\cdot)$ is continuously differentiable
    with bounded Jacobian matrices with respect to $\bs{x}$ and $\bs{u}$.
    
    For any data index $i\in \mathbb{Z}^+_N$ and any admissible input $\bs{u} \in U$,
    the mean value theorem implies
    \begin{equation*}
        f(\bs{x},\bs{u}) = f(\bs{x}_{i},\bs{u}_{i})
        + \nabla_{\bs{x}} f(\xi_x)(\bs{x}-\bs{x}_{i})
        + \nabla_{\bs{u}} f(\xi_u)(\bs{u}-\bs{u}_{i})
    \end{equation*}
    for some $\xi_x,\xi_u$ on the line segments connecting $(\bs{x},\bs{x}_{i})$ and $(\bs{u},\bs{u}_{i})$ respectively. Using the componentwise bounds from Assumption~\ref{lipschitz_cbf} and applying Proposition~2 in~\cite{jin2022data}, which provides componentwise lower and upper bounds for a matrix-vector product given interval bounds on the matrix and the vector, each term is bounded as
    \begin{align*}
        &\nabla_{\bs{x}} f(\xi_x)(\bs{x}-\bs{x}_{i}) 
        \ge \underline{J}^{kj}_{r,\bs{x}} \Delta_{\bs{x}_{i,I_{\dot{h}_r^{kj}}}}^+ 
        - \overline{J}^{kj}_{r,\bs{x}} \Delta_{\bs{x}_{i,I_{\dot{h}_r^{kj}}}}^-,
        \\
        &\nabla_{\bs{u}} f(\xi_u)(\bs{u}-\bs{u}_{i}) 
        \ge \underline{J}^{kj}_{r,\bs{u}} \Delta_{\bs{u}_{i,I_{h_r^{kj}}}}^+ 
        - \overline{J}^{kj}_{r,\bs{u}} \Delta_{\bs{u}_{i,I_{h_r^{kj}}}}^-.
    \end{align*}
    Combining these gives a lower bound on $f(\bs{x},\bs{u})$ as
    \begin{align*}
    f(\bs{x},\bs{u}) \ge\;
    \dot{h}_{r,i}^{kj}
    &+ \underline{J}^{kj}_{r,\bs{x}} \Delta_{\bs{x}_{i,I_{\dot{h}_r^{kj}}}}^+
    - \overline{J}^{kj}_{r,\bs{x}} \Delta_{\bs{x}_{i,I_{\dot{h}_r^{kj}}}}^- \nonumber\\
    &+ \underline{J}^{kj}_{r,\bs{u}} \Delta_{\bs{u}_{i,I_{h_r^{kj}}}}^+
    - \overline{J}^{kj}_{r,\bs{u}} \Delta_{\bs{u}_{i,I_{h_r^{kj}}}}^- .
    \end{align*}
    Let $i^{\bs{x}}_{kj}$ denote a data index satisfying the data-driven condition~\eqref{eq:cbfdd_cond}.  
    Replacing the generic index $i$ in the lower bound with $i^{\bs{x}}_{kj}$ and taking the supremum over $\bs{u} \in U$ we get
   \begin{align}\label{eq:proof}
    &\sup_{u \in U}(\dot{h}_r^{kj}(\bs{x}_{I_{\dot{h}_r^{kj}}}),\bs{u}_{I_{h_r^{kj}}}) \ge \sup_{\bs{u}\in U}  \dot{h}_{r,i^{\bs{x}}_{kj}}^{kj} + \underline{J}^{kj}_{r,\bs{x}}  \Delta_{\bs{x}_{i^{\bs{x}}_{kj},I_{\dot{h}_r^{kj}}}}^+ - \overline{J}^{kj}_{r,\bs{x}} \Delta_ {\bs{x}_{i^{\bs{x}}_{kj},I_{\dot{h}_r^{kj}}}}^-\\ \nonumber
        &+ \underline{J}^{kj}_{r,\bs{x}} \Delta_ {\bs{u}_{i^{\bs{x}}_{kj},I_{h_r^{kj}}}}^+ 
        - \overline{J}^{kj}_{r,\bs{u}} \Delta_ {\bs{u}_{i^{\bs{x}}_{kj},I_{h_r^{kj}}}}^-.
    \end{align}
We can then apply the data-driven condition~\eqref{eq:cbfdd_cond} to the right-hand side, yielding
    \begin{equation*}
        \sup_{\bs u \in U} (\dot{h}_r^{kj}(\bs{x}_{I_{\dot{h}_r^{kj}}},\bs{u}_{I_{h_r^{kj}}})) \ge -\alpha\big(h(\bs{x}_{I_{h_r^{kj}}})\big).
    \end{equation*}
    Then, it follows that standard ZCBF inequality~\eqref{eq:ZCBFcond} holds, which implies that
    the safety set $S_r^{kj}$ is robustly controlled invariant.
\end{proof}
Building on the general CBF formulation established above, the ZCBF functions must be designed separately for each edge type, because the time derivative of the connectivity constraint \eqref{eq:connectivity} involves different combinations of states and control inputs depending on whether the connected agents are leaders or followers. Moreover, the original data-driven inequality \eqref{eq:cbfdd_cond} may involve coupled control inputs from multiple agents, which conflicts with the distributed implementation requirement. Therefore, these conditions needs to be reformulated in decoupled form \eqref{eq:decoupledZCBF} when necessary. In the following, we define the ZCBF functions $h_r^{kj}$ for each edge type so that their time derivatives $\dot{h}_r^{kj}$ involve the necessary control inputs and have bounded Jacobians, thereby ensuring that the 3D-ZCBF conditions \eqref{eq:cbfdd_cond} can be enforced for the forward invariance of their safety sets. Within their corresponding safety sets, these ZCBFs are designed to maintain the desired connectivity between agents. Then, we derive their associated decoupled conditions suitable for distributed implementation. The distributed conditions for each edge are subsequently collected and presented together in Theorem \ref{prop:robust-invariance}, providing a unified statement of the constraints that each leader must satisfy across all its relevant edges.
\subsection{Leader-follower edges}\label{sec:lf}
For each leader-follower edge $e_{kj} \in E_{lf}$, where $k \in V_f$ and $j \in V_l$, a single ZCBF candidate is considered, i.e., $n_{kj} = 1$. The corresponding ZCBF candidate is defined as 
\begin{align}\label{eq:cbflf}
    & h^{kj}_{1}(\bs{x}_{I_{h^{kj}_{1}}}) := d_{\max}^2 - \|\bs{\bar{x}}_{kj}\|_2^2, 
    \quad e_{kj} \in E_{lf},
\end{align}
where $\bs{\bar x}_{kj}:=\bs{x}_k-\bs{x}_j$, $I_{h^{kj}_{1}}=\{k, j\}$. {Note that, since the state and input sets $X,U$ are bounded,  the time derivative of ZCBF candidate $h_1^{kj}$ satisfies the bounded Jacobian requirement given in  Assumption~\ref{lipschitz_cbf}  which is needed for enforcing 3D-ZCBF condition~\eqref{eq:cbfdd_cond}. This follows from the boundedness of the state differences $\bs{\bar x}_{kj}$ in the bounded domain $X$ and the bounded Jacobians of the dynamics.} For this function, it follows from \eqref{agentZCBF} that the index set of its time derivative is $
I_{\dot{h}^{kj}_{1}} = \{k,j\} \cup \mathcal{N}_k \cup \mathcal{N}_j
$. Among these agents, the only nonzero input term appearing in $\dot{h}^{kj}_{1}$ is $\bs{u}_j$ since agent $k$ is a follower and the dynamics of $k$ and $j$ depend only on the states of $\mathcal{N}_k$ and $\mathcal{N}_j$, but not on their inputs.  Consequently, in the distributed formulation of the 3D-ZCBF condition \eqref{eq:decoupledZCBF}, the only leader enforcing the inequality is $j$, i.e., 
$\ell \in I_{h_1^{kj}} \cap V_l = \{j\}.$
Therefore, the summation over leader contributions in the general distributed definition reduces to a single term, and the condition is already naturally decoupled, requiring no further separation.
 The data index $i^{\bs{x}}_{kj}$ can  be chosen as
\begin{align}\label{eq:istarlf}
        i^{\bs{x}}_{kj} = \arg\max_{i \in \mathbb{Z}_N^+} \;&
        \dot{h}^{kj}_{1,i} + \underline{J}^{kj}_{1,\bs{x}} \Delta_{\bs{x}_{i,I_{\dot{h}^{kj}_{1}}}}^+ - \overline{J}^{kj}_{1,\bs{x}} \Delta_{\bs{x}_{i,I_{\dot{h}^{kj}_{1}}}}^-,  
\end{align}
  where $\Delta_{\bs{x}_{i,I_{\dot{h}^{kj}_{1}}}} := \bs{x}_{I_{\dot{h}^{kj}_{1}}} - \bs{x}_{i,I_{\dot{h}^{kj}_{1}}}$. This choice picks the data sample that maximizes the state-dependent part of the lower bound in \eqref{eq:proof}, identifying the point that provides the tightest guaranteed lower bound on $\sup_{\bs{u}\in U}\dot{h}^{kj}_1$ using only state measurements. Since this selection does not depend on any agent’s input, it yields a condition suitable for distributed implementation. 
\subsection{Leader-leader edges}\label{sec:ll}
For each leader-leader edge $e_{kj} \in E_{lf}$, where $k ,j \in V_l$, a single ZCBF candidate is considered, i.e., $n_{kj} = 1$. We consider the same ZCBF candidate as in leader-follower case as
\begin{align}\label{eq:cbfll}
    & h^{kj}_{1}(\bs{x}_{I_{h^{kj}_{1}}}) := d_{\max}^2 - \|\bs{\bar{x}}_{kj}\|_2^2, 
    \quad e_{kj} \in E_{ll},
\end{align}
where $\bs{\bar x}_{kj}:=\bs{x}_k-\bs{x}_j$, $I_{h^{kj}_{1}}=\{k, j\}$. The index set of its time derivative is $
I_{\dot{h}^{kj}_{1}} = \{k,j\} \cup \mathcal{N}_k \cup \mathcal{N}_j
$. Since both nodes $k$ and $j$ are leaders, the derivative $\dot{h}^{kj}_{1}$ depends on the two inputs $\bs{u}_k$ and $\bs{u}_j$. Consequently, the data-driven ZCBF condition \eqref{eq:cbfdd_cond} depends on both inputs. To make it suitable for distributed implementation, the condition is divided between agents $k$ and $j$, so that each leader enforces its own part independently. Adding these contributions together recovers the full condition. These decoupled constraints are formalized in the following lemma.
\begin{Lemma}\label{lemma:llcbf}
     Let $e_{kj} \in E_{ll}$ where $k,j \in V_l$ be a leader–leader edge. For an unknown control-affine multi-agent system~\eqref{agentZCBF}, consider a continuously differentiable function $h^{kj}_{1}: X_{I_{h^{kj}_{1}}} \rightarrow \mathbb{R}$ as defined in \eqref{eq:cbfll} with an input-output dataset $\mathcal{D}^{kj}_1$, satisfying Assumption~\ref{ass_data}. Let $i^{\bs{x}}_{kj} \in \mathbb{Z}_N^+$ be the index that maximizes 
    \begin{align}\label{eq:istarll}
        i^{\bs{x}}_{kj} = \arg\max_{i \in \mathbb{Z}_N^+} \;&
        \dot{h}^{kj}_{1,i} + \underline{J}^{kj}_{1,\bs{x}} \Delta_{\bs{x}_{i,I_{\dot{h}^{kj}_{1}}}}^+ - \overline{J}^{kj}_{1,\bs{x}} \Delta_{\bs{x}_{i,I_{\dot{h}^{kj}_{1}}}}^-,  
    \end{align}
    where $\Delta_{\bs{x}_{i,I_{\dot{h}^{kj}_{1}}}} := \bs{x}_{I_{\dot{h}^{kj}_{1}}} - \bs{x}_{i,I_{\dot{h}^{kj}_{1}}}$. Then $h^{kj}_{1}$ is a 3D-ZCBF for edge $e_{kj}$ 
    if there exist  a class $\mathcal{K}_{\infty}$ function $\alpha(\cdot)$ and constants $\beta_k\ge 0, \beta_j \ge 0, \beta_k + \beta_j = 1$ such that 
\begin{subequations}\label{eq:cbfcondll}
    \begin{align}
        &\sup_{\bs{u}_k \in U_k} \beta_k [\dot{h}^{kj}_{1,i^{\bs{x}}_{kj}} +\underline{J}^{kj}_{1,\bs{x}_{\mathcal{N}_{k  j}}} \Delta^+_{\bs{x}_{i^{\bs{x}}_{kj},\mathcal{N}_{k  j}}} - \overline{J}^{kj}_{1,\bs{x}_{\mathcal{N}_{k  j}}} \Delta^-_{\bs{x}_{i^{\bs{x}}_{kj},\mathcal{N}_{k  j}}}] \nonumber \\
        &+ \underline{J}^{kj}_{1,\bs{x}_k} \Delta^+_{\bs{x}_{i^{\bs{x}}_{kj},k}} - \overline{J}^{kj}_{1,\bs{x}_k} \Delta^-_{\bs{x}_{i^{\bs{x}}_{kj},k}}  
        + \underline{J}^{kj}_{1,\bs{u}_k} \Delta^+_{\bs{u}_{i^{\bs{x}}_{kj},k}}
        - \overline{J}^{kj}_{1,\bs{u}_k} \Delta^-_{\bs{u}_{i^{\bs{x}}_{kj},k}}  \nonumber\\
        &+ \underline{J}^{kj}_{1,\bs{x}_{\mathcal{N}_{k \setminus j}}} \Delta^+_{\bs{x}_{i^{\bs{x}}_{kj},\mathcal{N}_{k \setminus j}}} - \overline{J}^{kj}_{1,\bs{x}_{\mathcal{N}_{k \setminus j}}} \Delta^-_{\bs{x}_{i^{\bs{x}}_{kj},\mathcal{N}_{k \setminus j}}} \nonumber \\
        &\ge - \beta_k \, \alpha(h^{kj}_{1}(\bs{x}_{I_{h_r^{kj}}})), \forall\bs{x} \in X,\\
       &\sup_{\bs{u}_j \in U_j} \beta_j [\dot{h}^{kj}_{1,i^{\bs{x}}_{kj}} +\underline{J}^{kj}_{1,\bs{x}_{\mathcal{N}_{k  j}}} \Delta^+_{\bs{x}_{i^{\bs{x}}_{kj},\mathcal{N}_{k  j}}} - \overline{J}^{kj}_{1,\bs{x}_{\mathcal{N}_{k  j}}} \Delta^-_{\bs{x}_{i^{\bs{x}}_{kj},\mathcal{N}_{k  j}}} ] \nonumber \\
        &+ \underline{J}^{kj}_{1,\bs{x}_j} \Delta^+_{\bs{x}_{i^{\bs{x}}_{kj},j}} - \overline{J}^{kj}_{1,\bs{x}_j} \Delta^-_{\bs{x}_{i^{\bs{x}}_{kj},j}}  
        + \underline{J}^{kj}_{1,\bs{u}_j} \Delta^+_{\bs{u}_{i^{\bs{x}}_{kj},j}}
        - \overline{J}^{kj}_{\bs{u}_j} \Delta^-_{\bs{u}_{i^{\bs{x}}_{kj},j}}  \nonumber\\
        &+ \underline{J}^{kj}_{1,\bs{x}_{\mathcal{N}_{j \setminus k}}} \Delta^+_{\bs{x}_{i^{\bs{x}}_{kj},\mathcal{N}_{j \setminus k}}} - \overline{J}^{kj}_{1,\bs{x}_{\mathcal{N}_{j \setminus k}}} \Delta^-_{\bs{x}_{i^{\bs{x}}_{kj},\mathcal{N}_{j \setminus k}}} \nonumber \\
        &\ge - \beta_j \, \alpha(h^{kj}_{1}(\bs{x}_{I_{h_r^{kj}}})), \forall\bs{x} \in X,
    \end{align}
    \end{subequations}
    where the left-hand sides and right-hand sides define the functions $\phi_k, \phi_j: X \to \mathbb{R}$,
    where $\Delta_{\bs{x}_{i^{\bs{x}}_{kj},A}} := \bs{x}_A - \bs{x}_{i^{\bs{x}}_{kj},A}$, $
    \Delta_{\bs{u}_{i^{\bs{x}}_{kj},A}} := \bs{u}_A - \bs{u}_{i^{\bs{x}}_{kj},A}$ for any agent index set $A \subseteq \{k,j\} \cup \mathcal{N}_k \cup \mathcal{N}_j$.
\end{Lemma}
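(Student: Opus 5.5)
To prove Lemma~\ref{lemma:llcbf}, I will verify both items of Definition~\ref{def:cbfcentral}. The strategy is to show that summing the two inequalities in \eqref{eq:cbfcondll} reproduces the data-driven condition \eqref{eq:cbfdd_cond} evaluated at the index $i^{\bs{x}}_{kj}$ from \eqref{eq:istarll}. The left-hand sides of \eqref{eq:cbfcondll} then serve as the functions $\phi_{k,1}^{kj},\phi_{j,1}^{kj}$, and the right-hand sides as $\psi_{k,1}^{kj}:=\beta_k\alpha(h_1^{kj})$ and $\psi_{j,1}^{kj}:=\beta_j\alpha(h_1^{kj})$. Once both items hold, invariance follows from Theorem~\ref{th:centralcbf}.

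The first step is a bookkeeping decomposition of the index set $I_{\dot h^{kj}_1}=\{k,j\}\cup\mathcal{N}_k\cup\mathcal{N}_j$ into the disjoint blocks $\{k\}$, $\{j\}$, $\mathcal{N}_{kj}$, $\mathcal{N}_{k\setminus j}$ and $\mathcal{N}_{j\setminus k}$. These blocks are disjoint by the definitions of $\mathcal{N}_{kj}$ and $\mathcal{N}_{k\setminus j}$, since $\mathcal{N}_{k\setminus j}$ excludes $j$ and the shared set. One should check separately that $k\in\mathcal{N}_j$ and $j\in\mathcal{N}_k$ are not double-counted; I will read $\mathcal{N}_{kj}$ as excluding $k,j$ where needed. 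Because $P^+$ and $P^-$ act elementwise, the row-vector products $\underline{J}^{kj}_{1,\bs{x}}\Delta^+_{\bs{x}}$ and $\overline{J}^{kj}_{1,\bs{x}}\Delta^-_{\bs{x}}$ split additively over these blocks. Likewise, the input term involves only $\bs{u}_k$ and $\bs{u}_j$ (followers have no inputs, and $I_{h_1^{kj}}=\{k,j\}$), so it splits into a $\bs{u}_k$ block and a $\bs{u}_j$ block.

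The second step is to allocate the pieces. The $k$-block, the $\mathcal{N}_{k\setminus j}$-block and the $\bs{u}_k$-block go entirely to leader $k$, and symmetrically for $j$. The pieces that neither leader owns exclusively are the data value $\dot h^{kj}_{1,i^{\bs{x}}_{kj}}$ and the $\mathcal{N}_{kj}$-block, and these are split with weights $\beta_k,\beta_j$ summing to one. Adding $\phi_k+\phi_j$ then recovers exactly the expression in \eqref{eq:cbfdd_cond}, and $\psi_k+\psi_j=(\beta_k+\beta_j)\alpha(h_1^{kj})=\alpha(h_1^{kj})$. Each $\phi_\ell$ uses only states in $I^{loc}_\ell$: it involves $\ell$, $\mathcal{N}_\ell$, and $\mathcal{N}_{kj}\subseteq\mathcal{N}_\ell$. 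The data index $i^{\bs{x}}_{kj}$ depends only on states in $I_{\dot h^{kj}_1}\subseteq I^{loc}_k\cap I^{loc}_j$, since $j\in\mathcal{N}_k$ gives $\mathcal{N}_j\subseteq I^{loc}_k$. So both leaders compute the same index without exchanging inputs. This establishes the distributed input-decoupled condition.

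The third step, which I expect to be the main subtlety, is passing from the two separate suprema to the single joint supremum in \eqref{eq:cbfdd_cond}. The key observation is that $\phi_k$ depends on $\bs{u}_k$ only and $\phi_j$ on $\bs{u}_j$ only. Hence, taking $U_{\{k,j\}}=U_k\times U_j$ (or otherwise assuming product structure of the projected input set),
\begin{align*}
\sup_{\bs{u}\in U}(\phi_k+\phi_j)=\sup_{\bs{u}_k\in U_k}\phi_k+\sup_{\bs{u}_j\in U_j}\phi_j\ge-\beta_k\alpha(h_1^{kj})-\beta_j\alpha(h_1^{kj})=-\alpha(h_1^{kj}).
\end{align*}
Compactness of $U$ ensures the suprema are attained, so the input $\bs{u}=(\bs{u}_k^*,\bs{u}_j^*)$ assembled from the independently chosen inputs lies in $K_{S_1^{kj},i^{\bs{x}}_{kj}}(\bs{x})$. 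This gives the data-driven condition for all $\bs{x}\in X$, and Theorem~\ref{th:centralcbf} completes the argument. If $U$ is not a product set, I would instead restrict to the product of projections and note this as an implicit assumption.
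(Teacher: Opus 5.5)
Your proposal is correct and follows essentially the same route as the paper. The paper justifies the lemma only by observing that the two weighted inequalities sum, via $\beta_k+\beta_j=1$ and the disjoint splitting of $\{k,j\}\cup\mathcal{N}_k\cup\mathcal{N}_j$ into $\{k\}$, $\{j\}$, $\mathcal{N}_{kj}$, $\mathcal{N}_{k\setminus j}$, $\mathcal{N}_{j\setminus k}$, to the full condition \eqref{eq:cbfdd_cond}. You are also more careful at two points the paper leaves implicit: the supremum step genuinely needs $U_{\{k,j\}}=U_k\times U_j$, and both leaders must compute the same index $i^{\bs{x}}_{kj}$, which requires a deterministic tie-breaking rule in the $\arg\max$.
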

The inequalities in \eqref{eq:cbfcondll} define the distributed functions 
$\phi_k, \phi_j$ and $\psi_k, \psi_j$, 
such that \eqref{eq:cbfdd_cond} is equivalently expressed in the decoupled form \eqref{eq:decoupledZCBF}. The weights $\beta_k, \beta_j$ distribute the enforcement of the safety constraint between the two leaders so that each satisfies only a portion of the 3D-ZCBF condition \eqref{eq:cbfdd_cond}. Satisfaction of the distributed conditions for both agents guarantees that the overall 3D-ZCBF condition holds, as the sum of the two inequalities equals the full condition. 

\subsection{Follower-follower edges}\label{sec:ff}
Edges between followers need a different approach because neither agent has an external input in its dynamics. Using the same ZCBF candidate as before \eqref{eq:cbfll} would produce a time derivative without any control terms. While the first derivative could still be evaluated for safety, it would not constrain the available inputs, making it insufficient to enforce safety. Ensuring that the safety set remains invariant would require conditions on the second derivative, where input terms appear. This, however, demands data on second derivatives, which is often impractical. Since derivatives are typically estimated from time-series data, using the second derivative would likely introduce considerable noise. To avoid this, alternative ZCBF candidates are designed so that input terms already appear in the first derivative, requiring only first-derivative data. To formalize this, we first make the following assumption.
\begin{Assumption}\label{as:ffleaders}
    For each follower–follower edge $e_{kj} \in E_{ff}$, it is assumed that followers $k$ and $j$ each have at least one distinct leader in their neighborhood, i.e.,$ (\mathcal{N}_k \cap V_l)\setminus \mathcal{N}_j \neq \emptyset$ and $(\mathcal{N}_j \cap V_l)\setminus \mathcal{N}_k \neq \emptyset$, with at least one leader for each follower.   
    Furthermore, for the purpose of defining ZCBF candidates, one leader from each follower’s neighborhood is selected, denoted $(\ell_k, \ell_j)$ with $\ell_k \in \mathcal{N}_k \cap V_l$ and $\ell_j \in \mathcal{N}_j \cap V_l$, with $\ell_k \neq \ell_j$. 
\end{Assumption}

Given this fixed leader pair, the relative distance between the followers $\bs{\bar x}_{kj} := \bs{x}_k - \bs{x}_j$ is split into two components: one parallel to the vector $\bs{\bar x}_{\ell_k\ell_j} := \bs{x}_{\ell_k} - \bs{x}_{k}$ connecting the leaders $\ell_k, \ell_j$ and one orthogonal to it as
\begin{align}
    &\bs{\bar x}_{kj} = \bs{\bar x}_{kj}^{\parallel} + \bs{\bar x}_{kj}^{\perp}, 
    \quad
    \bs{\bar x}_{kj}^{\parallel} := (\bs{\bar x}_{kj}^\top  \bs{\bar x}^\wedge_{\ell_k\ell_j})\, \bs{\bar x}^\wedge_{\ell_k\ell_j}, 
    \quad
    \bs{\bar x}_{kj}^{\perp} := \bs{\bar x}_{kj} - \bs{\bar x}_{kj}^{\parallel},
\end{align}
{where $\bs{\bar x}^\wedge_{\ell_k\ell_j} := \frac{\bs{\bar x}_{\ell_k\ell_j}}{\|\bs{\bar x}_{\ell_k\ell_j}\|_2}$ 
is well-defined with $\|\bs{x}_{\ell_k} - \bs{x}_{\ell_j}\|_2 >  \epsilon$ for some small $\epsilon > 0$.
}
For follower-follower edges, we consider two ZCBF candidates, i.e., $n_{kj} = 2$, corresponding to first-order barrier functions along these parallel and orthogonal components, defined as
\begin{align}
     h^{kj}_{1}(\bs{x}_{I_{h_1^{kj}}}) &:= \frac{d_{max}^2}{2} - \|\bs{\bar x}_{kj}^{\parallel}\|_2^2, \quad e_{kj} \in E_{ll}, \label{eq:cbfffparalel}\\
     h^{kj}_{2}(\bs{x}_{I_{h_2^{kj}}}) &:= \frac{d_{max}^2}{2} - \|\bs{\bar x}_{kj}^{\perp}\|_2^2, \quad e_{kj} \in E_{ll}, \label{eq:cbfffperp}
\end{align}
where $I_{h^{kj}_{1}} =I_{h^{kj}_{2}} = \{k,j,\ell_k,\ell_j\}$. This geometric decomposition is illustrated in Fig~\ref{fig:ff_zcbf_geometry}. For these functions, it follows from \eqref{agentZCBF} that the index set of its time derivative is $
I_{\dot{h}^{kj}_{1}} =I_{\dot{h}^{kj}_{2}} = \{k,j,\ell_k,\ell_j\} \cup \mathcal{N}_k \cup \mathcal{N}_j \cup \mathcal{N}_{\ell_k} \cup \mathcal{N}_{\ell_j}$. {Since state and input sets $X,U$ are bounded,  the ZCBF candidates $h_1^{kj}, h_2^{kj}$ satisfies the bounded Jacobian requirement of Assumption~\ref{lipschitz_cbf} for its time derivative $\dot{h}^{kj}_1$. This follows from the boundedness of state differences in $X$ and the 
fact that the derivative of $\bs{\bar x}^\wedge_{\ell_k\ell_j}$ with respect to 
$\bs{x}_{\ell_k},\bs{x}_{\ell_j}$ remains bounded.
} 
It can be seen that inside the intersection of their safety sets $S_1^{kj},S_2^{kj}$, the connectivity condition \eqref{eq:connectivity}
is satisfied, thus maintaining connectivity between followers. For these ZCBF candidates, the first time derivatives $\dot{h}^{kj}_{1}, \dot{h}^{kj}_{2}$ include input terms, since the barrier functions depend on the states of the leader agents. This decomposition ensures that each barrier function is of first order, allowing direct use in a data-driven ZCBF framework where only the first derivative of the barrier function is available as in Assumption \ref{ass_data}.  {As ZCBF candidates $h_1^{kj}$ and $h_2^{kj}$ are continuously differentiable for all $\bs{x}_{I_{h_1^{kj}}} \in X_{I_{h_1^{kj}}}$, $\bs{x}_{I_{h_2^{kj}}}\in X_{I_{h_1^{kj}}}$, 
the ZCBF conditions \eqref{eq:ZCBFcond} are well-defined, so that the associated forward invariance conditions can be enforced through our data-driven framework.}

\begin{figure}[t]
    \centering
    \input{figures/ff.tikz}
    \caption{Geometric decomposition used in the follower--follower ZCBF construction.
    The relative displacement $\bs{\bar x}_{kj}$ is decomposed into components parallel
    and orthogonal to the leader--leader direction $\bs{\bar x}_{\ell_k\ell_j}$, yielding the
barrier functions $h^{kj}_1$ and $h^{kj}_2$ in
\eqref{eq:cbfffparalel}–\eqref{eq:cbfffperp}.}
    \label{fig:ff_zcbf_geometry}
\end{figure}


For these functions, the 3D-ZCBF condition \eqref{eq:cbfdd_cond} links the inputs of both leader agents $\bs{u}_{\ell_k}$ and $\bs{u}_{\ell_j}$, creating coupled inequalities. Following the approach used for leader–leader edges,  this condition is split between the two agents to obtain decoupled constraints \eqref{eq:decoupledZCBF}. Before formalizing the follower–follower ZCBF conditions, we specify how the data sample used for bounding $\dot{h}^{kj}_{1}$ is chosen. 
For each follower–follower edge $e_{kj} \in E_{ff}$, an index $i^{\bs{x}}_{kj} \in \mathbb{Z}_N^+$ is selected as
\begin{subequations}
\label{eq:istarff}
     \begin{align}
                &i^{\bs{x}}_{kj} = \arg\max_{i \in \mathbb{Z}_N^+} \;
                \dot{h}^{kj}_{1,i} + \underline{J}^{kj}_{1,\bs{x}} \Delta_{\bs{x}_{i,I_{\dot{h}^{kj}_{1}}}}^+ - \overline{J}^{kj}_{1,\bs{x}} \Delta_{\bs{x}_{i,I_{\dot{h}^{kj}_{1}}}}^-,\\
                &\text{s.t.} \nonumber \\
                &\|(1-\lambda)\big(\bs{x}_{i^{\bs{x}}_{kj},\ell_k} - \bs{x}_{i^{\bs{x}}_{kj},\ell_j}\big)
                + \lambda\big(\bs{x}_{\ell_k} - \bs{x}_{\ell_j}\big)\|_2
                \geq \epsilon,
                \quad \forall\, \lambda \in [0,1],\label{ineq:ffdatacond}
    \end{align}
\end{subequations}
  where $\Delta_{\bs{x}_{i,I_{\dot{h}^{kj}_{1}}}} := \bs{x}_{I_{\dot{h}^{kj}_{1}}} - \bs{x}_{i,I_{\dot{h}^{kj}_{1}}}$. The condition \eqref{ineq:ffdatacond} ensures that the interpolation between the dataset point and the current states of agents $\ell_k$ and $\ell_j$ remains inside the state set $X$. This guarantees that the Jacobians of the barrier functions remain bounded along the interpolation segment, which is required for the application of Theorem~\ref{th:centralcbf}, which relies on the Mean Value Theorem along this segment to establish that the functions satisfy the 3D-ZCBF property.  In practice, this can be enforced by verifying the condition at the dataset point ($\lambda=0$) as well as at the critical point along the segment where the derivative of condition \eqref{ineq:ffdatacond} with respect to $\lambda$ vanishes.
  
The decoupled constraints can now be stated formally 
in the following lemma.
\begin{Lemma}\label{lemma:ffcbf}
     Let $e_{kj} \in E_{ff}$ where $k,j \in V_f$ be a follower–follower edge satisfying Assumption~\ref{as:ffleaders}. For the unknown control-affine multi-agent system~\eqref{agentZCBF},  consider the continuously differentiable functions $h^{kj}_{1},h^{kj}_{2}: X_{I_{h_r^{kj}}} \rightarrow \mathbb{R}$ as defined in \eqref{eq:cbfffparalel}, \eqref{eq:cbfffperp} with input-output datasets $\mathcal{D}^{kj}_1$, $\mathcal{D}^{kj}_2$ satisfying Assumptions~\ref{ass_data}.
    Let $i^{\bs{x}}_{kj} \in\mathbb{Z}_N^+$ denote the index defined in \eqref{eq:istarff}.
         Then $h^{kj}_{1}$ is a 3D-ZCBF for edge $e_{kj}$ 
    if there exist  a class $\mathcal{K}_{\infty}$ function $\alpha(\cdot)$ and constants $\beta_k, \beta_j \ge 0, \beta_k + \beta_j = 1$ such that
    \begin{subequations}
        \begin{align}
        &\sup_{\bs{u}_{\ell_k} \in U_{\ell_k}} \beta_k (\dot{h}^{kj}_{1,i^{\bs{x}}_{kj}} +\underline{J}^{kj}_{1,\bs{x}_{\mathcal{N}^+_{kj}}} \Delta^+_{\bs{x}_{i^{\bs{x}}_{kj},\mathcal{N}^+_{kj}}} - \overline{J}^{kj}_{1,\bs{x}_{\mathcal{N}^+_{kj}}} \Delta^-_{\bs{x}_{i^{\bs{x}}_{kj},\mathcal{N}^+_{kj}}} ) 
           \nonumber \\
           &+ \underline{J}^{kj}_{1,\bs{x}_k} \Delta^+_{\bs{x}_{i^{\bs{x}}_{kj},k}}
            - \overline{J}^{kj}_{1,\bs{x}_k} \Delta^-_{\bs{x}_{i^{\bs{x}}_{kj},k}} + \underline{J}^{kj}_{1,\bs{x}_{\ell_k}} \Delta^+_{\bs{x}_{i^{\bs{x}}_{kj},{\ell_k}}} 
            - \overline{J}^{kj}_{1,\bs{x}_{\ell_k}} \Delta^-_{\bs{x}_{i^{\bs{x}}_{kj},{\ell_k}}}  \nonumber\\  
            &+ \underline{J}^{kj}_{1,\bs{u}_{\ell_k}} \Delta^+_{\bs{u}_{i^{\bs{x}}_{kj},{\ell_k}}}  
            - \overline{J}^{kj}_{\bs{u}_{\ell_k}} \Delta^-_{\bs{u}_{i^{\bs{x}}_{kj},k}}  
            + \underline{J}^{kj}_{1,\bs{x}_{\mathcal{N}^+_{k \setminus j}}} \Delta^+_{\bs{x}_{i^{\bs{x}}_{kj},\mathcal{N}^+_{k \setminus j}}} \nonumber\\
            &- \overline{J}^{kj}_{1,\bs{x}_{\mathcal{N}^+_{k \setminus j}}} \Delta^-_{\bs{x}_{i^{\bs{x}}_{kj},\mathcal{N}^+_{k \setminus j}}} 
            \ge - \beta_k \, \alpha(h^{kj}_{1}(\bs{x}_{I_{h^{kj}_{1}}})), \forall\bs{x} \in X,   \\
            &\sup_{\bs{u}_{\ell_j} \in U_{\ell_j}} \beta_j (\dot{h}^{kj}_{1,i^{\bs{x}}_{kj}} +\underline{J}^{kj}_{1,\bs{x}_{\mathcal{N}^+_{kj}}} \Delta^+_{\bs{x}_{i^{\bs{x}}_{kj},\mathcal{N}^+_{kj}}} - \overline{J}^{kj}_{1,\bs{x}_{\mathcal{N}^+_{kj}}} \Delta^-_{\bs{x}_{i^{\bs{x}}_{kj},\mathcal{N}^+_{kj}}} ) 
            \nonumber \\
            &+ \underline{J}^{kj}_{1,\bs{x}_j} \Delta^+_{\bs{x}_{i^{\bs{x}}_{kj},j}}
            - \overline{J}^{kj}_{1,\bs{x}_j} \Delta^-_{\bs{x}_{i^{\bs{x}}_{kj},j}} + \underline{J}^{kj}_{1,\bs{x}_{\ell_j}} \Delta^+_{\bs{x}_{i^{\bs{x}}_{kj},{\ell_j}}} - \overline{J}^{kj}_{1,\bs{x}_{\ell_j}} \Delta^-_{\bs{x}_{i^{\bs{x}}_{kj},{\ell_j}}}  \nonumber\\ 
            &+ \underline{J}^{kj}_{1,\bs{u}_{\ell_j}} \Delta^+_{\bs{u}_{i^{\bs{x}}_{kj},{\ell_j}}} 
            - \overline{J}^{kj}_{1,\bs{u}_{\ell_j}} \Delta^-_{\bs{u}_{i^{\bs{x}}_{kj},j}}  
            + \underline{J}^{kj}_{1,\bs{x}_{\mathcal{N}^+_{j \setminus k}}} \Delta^+_{\bs{x}_{i^{\bs{x}}_{kj},\bs{x}_{\mathcal{N}^+_{j \setminus k}}}} \nonumber \\
            &- \overline{J}^{kj}_{1,\bs{x}_{\mathcal{N}^+_{j \setminus k}}} \Delta^-_{\bs{x}_{i^{\bs{x}}_{kj},\mathcal{N}^+_{j \setminus k}}}   
            \ge - \beta_j \, \alpha(h^{kj}_{1}(\bs{x}_{I_{h^{kj}_{1}}})), \forall\bs{x} \in X,\\
        \end{align}
        \end{subequations}
where $\Delta_{\bs{x}_{i^{\bs{x}}_{kj},A}} := \bs{x}_A - \bs{x}_{i^{\bs{x}}_{kj},A}, 
    \Delta_{\bs{u}_{i^{\bs{x}}_{kj},A}} := \bs{u}_A - \bs{u}_{i^{\bs{x}}_{kj},A}$, 
    for any agent index set  
    $A \subseteq \{k,j\} \cup \mathcal{N}^+_{kj} \cup \mathcal{N}^+_{k \setminus j} \cup \mathcal{N}^+_{j \setminus k}$ and  $\mathcal{N}^+_{kj} := (\mathcal{N}_k \cup \mathcal{N}_{\ell_k}) \cap (\mathcal{N}_j \cup \mathcal{N}_{\ell_j})$ denotes the neighbors shared by agents $k$ and $j$ including their leaders, while  
   $ \mathcal{N}^+_{k \setminus j} := (\mathcal{N}_k \cup \mathcal{N}_{\ell_k}) \setminus (\mathcal{N}_j \cup \mathcal{N}_{\ell_j}), \quad 
    \mathcal{N}^+_{j \setminus k} := (\mathcal{N}_j \cup \mathcal{N}_{\ell_j}) \setminus (\mathcal{N}_k \cup \mathcal{N}_{\ell_k})$
    denote the neighbors of $k$ and $j$ (and their leaders) that are not shared with the other agent. The same statement holds for the orthogonal function $h^{kj}_2$, i.e., $h^{kj}_2$ is a 3D-ZCBF for edge $e_{kj}$ if the optimization problem above remains feasible when $h^{kj}_1$ is replaced by $h^{kj}_2$ in all terms.
\end{Lemma}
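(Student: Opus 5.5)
The plan is to show that adding the two decoupled inequalities for $\ell_k$ and $\ell_j$ recovers the full data-driven condition \eqref{eq:cbfdd_cond} for $h^{kj}_1$ at the index $i^{\bs{x}}_{kj}$. Once that is established, Definition~\ref{def:cbfcentral} identifies $h^{kj}_1$ as a 3D-ZCBF, and Theorem~\ref{th:centralcbf} gives robust controlled invariance of $S^{kj}_1$. The argument parallels Lemma~\ref{lemma:llcbf}. The new ingredients are that the inputs now belong to the leaders $\ell_k,\ell_j$ rather than to the edge endpoints, and that the interpolation constraint \eqref{ineq:ffdatacond} is needed for the mean value argument.

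\textbf{Step 1 (partitioning the lower bound).} I first partition the state index set $I_{\dot h^{kj}_1}=\{k,j,\ell_k,\ell_j\}\cup\mathcal{N}_k\cup\mathcal{N}_j\cup\mathcal{N}_{\ell_k}\cup\mathcal{N}_{\ell_j}$ into disjoint blocks. The blocks are $\{k,\ell_k\}\cup\mathcal{N}^+_{k\setminus j}$, $\{j,\ell_j\}\cup\mathcal{N}^+_{j\setminus k}$, and $\mathcal{N}^+_{kj}$. Any overlaps, for instance $\ell_k\in\mathcal{N}_k$, are assigned to exactly one block so that no term is counted twice.

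The input index set reduces to $\{\ell_k,\ell_j\}$, since $k$ and $j$ are followers and carry no input. The operations $(\cdot)^\pm$ are elementwise, and the products $\underline J\Delta^+-\overline J\Delta^-$ are sums over components. Hence the state and input terms on the left of \eqref{eq:cbfdd_cond} split exactly into a sum over these blocks.

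\textbf{Step 2 (adding the two inequalities).} The shared part consists of $\dot h^{kj}_{1,i^{\bs x}_{kj}}$ and the $\mathcal{N}^+_{kj}$ terms, which both leaders can compute under Assumption~\ref{ass:neighborinfo}. It appears with weights $\beta_k$ and $\beta_j$, and these weights sum to one. Each private block appears in exactly one inequality. The right-hand sides add to $-(\beta_k+\beta_j)\alpha(h^{kj}_1)=-\alpha(h^{kj}_1)$.

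Because each left-hand side depends only on its own leader's input, I use
\[
\sup_{(\bs u_{\ell_k},\bs u_{\ell_j})}\big(\phi_k+\phi_j\big)=\sup_{\bs u_{\ell_k}}\phi_k+\sup_{\bs u_{\ell_j}}\phi_j,
\]
which holds provided $U$ is a product set over agents. This is the same implicit requirement as in the leader–leader case. From it, \eqref{eq:cbfdd_cond} follows, and the functions $\phi_{\ell_k},\phi_{\ell_j},\psi_{\ell_k},\psi_{\ell_j}$ realize the decoupled form \eqref{eq:decoupledZCBF}.

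\textbf{Step 3 (validity of the mean value argument).} Theorem~\ref{th:centralcbf} applies the mean value theorem along the segment joining the data point and the current state. For $h^{kj}_1$ and $h^{kj}_2$, the gradient involves $\bs{\bar x}^\wedge_{\ell_k\ell_j}$, which is differentiable with bounded derivative only when $\|\bs x_{\ell_k}-\bs x_{\ell_j}\|_2\ge\epsilon$. Constraint \eqref{ineq:ffdatacond} guarantees exactly this along the entire segment, since the leader difference is affine in $\lambda$. Therefore $\dot h^{kj}_1$ is differentiable on the segment, and the Jacobian bounds of Assumption~\ref{lipschitz_cbf} hold at every intermediate point $\xi_x$.

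\textbf{The orthogonal function and the main obstacle.} For $h^{kj}_2$ the argument is verbatim, with $h^{kj}_1$ replaced by $h^{kj}_2$; the index sets and data index constraint are the same. I expect Step 3 to be the main obstacle. The rest is bookkeeping, but here one must confirm that the bounded Jacobian assumption is used only on the $\epsilon$-separated region and that the segment condition suffices for the mean value theorem. My approach is to restrict Assumption~\ref{lipschitz_cbf} to the set $\{\|\bs x_{\ell_k}-\bs x_{\ell_j}\|_2\ge\epsilon\}$ and then invoke Theorem~\ref{th:centralcbf} on that domain.
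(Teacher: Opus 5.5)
Your proposal is correct and is essentially the paper's own argument: the paper gives no separate proof, but it likewise splits \eqref{eq:cbfdd_cond} between $\ell_k$ and $\ell_j$ with $\beta_k+\beta_j=1$, observes that the two inequalities sum to the full condition, and relies on \eqref{ineq:ffdatacond} to keep the mean-value segment of Theorem~\ref{th:centralcbf} in the $\epsilon$-separated region. You are more careful than the paper on two points it leaves implicit: the product structure of $U$ needed to split the supremum, and the need to assign $k,j,\ell_k,\ell_j$ to a single block so that nothing is double counted, since as literally defined these agents also lie in the $\mathcal{N}^+$ blocks.
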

     \begin{remark}
            It should be noted that Assumption~\ref{as:ffleaders} is made primarily to simplify the definition of the ZCBF candidates in \eqref{eq:cbfffparalel}–\eqref{eq:cbfffperp}. In general, only one leader state is required for the barrier function to include a corresponding input in its time derivative, which is the minimal requirement for the 3D-ZCBF condition to be verifiable from first-order input–output data. If only one follower e.g. $k$ has a leader, the parallel projection in the ZCBF can be taken along the vector connecting that leader and the follower $\bs{x}_{\ell_k}-\bs{x}_k$, instead of a vector connecting two distinct leaders $\bs{x}_{\ell_k}-\bs{x}_{\ell_j}$. Therefore, the assumption of two distinct leaders is not strictly necessary for the approach to work; it is introduced for notational convenience and to avoid defining multiple separate cases
        \end{remark}
  
\subsection{Distributed Safety Synthesis and Implementation}

Overall system safety can be guaranteed when the 3D-ZCBF conditions hold on every edge: leader–follower edges satisfying \eqref{eq:cbfdd_cond}, leader–leader edges satisfying Lemma~\ref{lemma:llcbf}, and follower–follower edges satisfying Lemma~\ref{lemma:ffcbf}. Under these conditions, network connectivity is maintained, as shown in the following theorem.
\begin{Theorem}\label{prop:robust-invariance}
        Consider the unknown control-affine multi-agent system \eqref{agentZCBF} with input–output datasets $\mathcal{D}^{kj}_{r}, e_{kj}\in E, \quad r=1,\cdots,n_{kj}$ satisfying Assumption~\ref{ass_data}. For each edge $e_{kj}\in E$, let
        \begin{align*}
             h^{kj}_{1} := 
        \begin{cases}
        \text{\eqref{eq:cbflf}}, & e_{kj} \in E_{lf},\\
        \text{\eqref{eq:cbfll}}, & e_{kj} \in E_{ll},\\
        \text{\eqref{eq:cbfffparalel}}, & e_{kj} \in E_{ff},
        \end{cases}
        \qquad
        h^{kj}_{2} := \text{\eqref{eq:cbfffperp}}, \quad e_{kj} \in E_{ff}.
        \end{align*}
     Then $h^{kj}_{1}$ and $h^{kj}_{2}$ are 3D-CBFs if, for each edge, the corresponding condition holds: \eqref{eq:cbfdd_cond} for $e_{kj}\in E_{lf}$, Lemma~\ref{lemma:llcbf} for $e_{kj}\in E_{ll}$, and Lemma~\ref{lemma:ffcbf} for $e_{kj}\in E_{ff}$.
   Moreover, let $ S_{1}^{kj}$, $ S_{2}^{kj}$, be the corresponding safety sets of $h^{kj}_{1}$, $h^{kj}_{2}$ for $e_{kj} \in E$ and let $i^{\bs{x}}_{kj}$ be defined as in \eqref{eq:istarlf}  for $e_{kj}\in E_{lf}$, \eqref{eq:istarll} for $e_{kj}\in E_{ll}$ and \eqref{eq:istarff} for $e_{kj}\in E_{ff}$. Then,  any Lipschitz continuous controller
    \begin{align*}
       &u(\bs{x}) \in K_S(\bs{x})=
        \bigcap_{e_{kj}\in E} K_{S_{1}^{kj},i^{\bs{x}}_{kj}}
        \cap 
        \bigcap_{e_{kj}\in E_{ff}} K_{S_{2}^{kj},i^{\bs{x}}_{kj}}  
    \end{align*}
    for the system~\eqref{agentZCBF} renders the set
    \begin{align*}
        &S= \bigcap_{e_{kj}\in E} S_{1}^{kj}
        \cap 
        \bigcap_{e_{kj}\in E_{ff}} S_{2}^{kj}
    \end{align*}
    controlled invariant. Therefore, a maximum distance of $d_{max}$ is kept between neighbouring agents, i.e. $\|\bs{x}_k-\bs{x}_j\|_2\le d_{max}, \forall e_{kj}\in E, \forall t\ge0$. 
    \end{Theorem}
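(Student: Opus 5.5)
Our plan has three parts: each $h^{kj}_r$ is a 3D-ZCBF, the intersection $S$ is invariant, and connectivity follows on $S$.

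\emph{Each $h^{kj}_r$ is a 3D-ZCBF.} We treat the three edge classes separately. For $e_{kj}\in E_{lf}$, condition \eqref{eq:cbfdd_cond} is assumed directly. The only input present is $\bs u_j$, so the decoupled form \eqref{eq:decoupledZCBF} holds trivially with $\phi^{kj}_{j,1}$ equal to the full left-hand side and $\psi^{kj}_{j,1}=\alpha(h^{kj}_1)$. For $e_{kj}\in E_{ll}$ we invoke Lemma~\ref{lemma:llcbf}, and for $e_{kj}\in E_{ff}$ Lemma~\ref{lemma:ffcbf}, applied to both $h^{kj}_1$ and $h^{kj}_2$. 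In both cases the key observation is the same. Partition the state index set $I_{\dot h^{kj}_r}$ into the shared part, the own-agent/own-leader parts and the unshared parts. Because the bounds in Proposition~2 of \cite{jin2022data} are componentwise, the expression $\underline J\Delta^+-\overline J\Delta^-$ splits additively over this partition. Moreover, $\beta_k+\beta_j=1$ recombines the shared terms and $\dot h^{kj}_{1,i^{\bs x}_{kj}}$. Summing the two leader inequalities therefore reproduces \eqref{eq:cbfdd_cond} exactly. Since each supremum is over a separate factor $U_{\ell}$, the supremum of the sum over $U_{\ell_k}\times U_{\ell_j}$ equals the sum of the suprema. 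This yields \eqref{eq:cbfdd_cond}.

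\emph{Invariance of $S$.} Fix a Lipschitz $u(\bs x)\in K_S(\bs x)$. Membership in $K_{S^{kj}_r,i^{\bs x}_{kj}}$ means the data-driven lower bound evaluated at $u(\bs x)$ is at least $-\alpha(h^{kj}_r)$. The mean-value argument in the proof of Theorem~\ref{th:centralcbf} then gives
\begin{align*}
\dot h^{kj}_r(\bs x,u(\bs x))\ge-\alpha\big(h^{kj}_r(\bs x)\big)
\end{align*}
along closed-loop trajectories, for every edge and every $r$ simultaneously. For follower–follower edges, the mean-value segment must stay where the Jacobian bounds hold, and this is guaranteed by the constraint \eqref{ineq:ffdatacond} in the choice \eqref{eq:istarff}. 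Lipschitz continuity of $u$ and of the dynamics \eqref{agentZCBF} gives existence and uniqueness of solutions. A comparison-lemma argument then shows $h^{kj}_r(\bs x(t))\ge0$ whenever $h^{kj}_r(\bs x(0))\ge0$. Each $S^{kj}_r$ is thus forward invariant under the common controller, and hence so is their intersection $S$. Nonemptiness of $K_S$ is part of the hypothesis, since $u(\bs x)\in K_S(\bs x)$ is assumed.

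\emph{Connectivity.} On $S^{kj}_1$ for $E_{lf}\cup E_{ll}$ we have $\|\bs{\bar x}_{kj}\|_2^2\le d_{max}^2$ immediately. For $E_{ff}$ the parallel and orthogonal components are orthogonal, so
\begin{align*}
\|\bs{\bar x}_{kj}\|_2^2=\|\bs{\bar x}^{\parallel}_{kj}\|_2^2+\|\bs{\bar x}^{\perp}_{kj}\|_2^2\le \tfrac{d_{max}^2}{2}+\tfrac{d_{max}^2}{2}=d_{max}^2 .
\end{align*}
Hence \eqref{eq:connectivity} holds for all $t\ge0$.

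The main obstacle is the first part, specifically showing that the two decoupled inequalities sum exactly to \eqref{eq:cbfdd_cond}. This requires checking that the index partition is disjoint and exhaustive: shared neighbors are split by $\beta$, while own and unshared terms are assigned to exactly one leader. It also requires the supremum-over-product-sets step, which needs $U$ to be a product of the $U_\ell$. We would state the latter as a standing interpretation of the distributed setting.
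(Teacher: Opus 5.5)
Your proposal is correct and follows the paper's proof: invariance of each $S_r^{kj}$ via Theorem~\ref{th:centralcbf} (you re-run its mean-value bound for the fixed Lipschitz controller and close with a comparison argument), preservation under intersection for a common controller, and connectivity by construction. You also spell out what the paper leaves to the lemmas and to ``by construction'': the recombination of the decoupled leader inequalities, which as you note needs the joint input set to be a product of the individual leader sets $U_{\ell}$, and the Pythagorean bound $\|\bs{\bar x}_{kj}\|_2^2=\|\bs{\bar x}^{\parallel}_{kj}\|_2^2+\|\bs{\bar x}^{\perp}_{kj}\|_2^2\le d_{max}^2$ on follower--follower edges.

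One caution for the partition check you flag: with $\mathcal{N}^+_{kj}$ as literally defined in Lemma~\ref{lemma:ffcbf}, the followers themselves lie in it ($k\in\mathcal{N}_j\cap\mathcal{N}_{\ell_k}$ and $j\in\mathcal{N}_k\cap\mathcal{N}_{\ell_j}$). So $k,j$ (and possibly $\ell_k,\ell_j$) must be removed from the shared set for the two inequalities to sum exactly to \eqref{eq:cbfdd_cond}; alternatively, simply cite the lemma, as the theorem's hypothesis allows.
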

\begin{proof}
   By Theorem~\ref{th:centralcbf}, any controller chosen from the individual safe input sets renders the corresponding safety set controlled invariant. Controlled invariance is preserved under intersection, so a controller that belongs to all safe input sets simultaneously ensures that the intersection of all safety sets is also controlled invariant. By construction of $h_1^{kj}$ and $h_2^{kj}$, this implies $\|\bs{x}_k-\bs{x}_j\|_2\le d_{\max}$.
\end{proof}
This theorem shows that the overall system can be rendered safe with ZCBF conditions decoupled across individual inputs. To realize this in practice, one may modify a nominal controller minimally so that the safety constraints are satisfied, leading to the following optimization-based control law.

\paragraph{Practical controller implementation}  In practice, a nominal input $u^{\mathrm{nom}}(\bs{x})$ may not satisfy all safety constraints. A safe input can be computed by projecting $u^{\mathrm{nom}}(\bs{x})$ onto the intersection of the safe sets $K_S(\bs{x})$. To ensure that this projection always has a solution, a non-negative slack variable $s \ge 0$ can be introduced, leading to the following optimization problem.
\begin{subequations}\label{eq:cbfOptProblem}
     \begin{align}
        u^*(\bs{x})
        &= \arg\min_{u(\bs{x}),\,s\ge 0} 
        \; \tfrac{1}{2} \|u(\bs{x}) - u^{\mathrm{nom}}(\bs{x})\|_2^2 + \rho\, \|s\|_1,\\
        \text{s.t.} \quad & u(\bs{x}) \in K_S(\bs{x}) + s,
    \end{align}
\end{subequations}
where $\rho>0$ penalizes slack violation.

\section{Simulation Results}
We evaluate the method on leader–follower formations \footnote{The code is available at \url{https://github.com/mirhanu/3D-ZCBF}.} with dynamics
\begin{subequations}
    \begin{align}
        &\dot{\bs{x}}_i = -\sum_{j \in \mathcal{N}_i} \left(\bs{x}_i - \bs{x}_j - d_{ij}^{\mathrm{des}}\right), \quad i \in V_f,\\
        &\dot{\bs{x}}_i = -\sum_{j \in \mathcal{N}_i} \left(\bs{x}_i - \bs{x}_j - d_{ij}^{\mathrm{des}}\right)+\bs{u}_i, \quad i \in V_l,
    \label{eq:follower_dynamics}
    \end{align}
\end{subequations}
where $d_{ij}^{\mathrm{des}}$ is the desired relative displacement. The dynamics are designed so that followers naturally move toward $d{ij}^{\mathrm{des}}$, while leaders can additionally use $\bs{u}_i$ to achieve higher-level objectives such as target tracking. We set $d_{max}=3$ and $\beta_k=\beta_j=0.5$ for all leader–leader and follower–follower edges.

Two instances with different graphs $G$ are studied: one with 1-Dimensional states and one with 2-Dimensional states. Datasets $\mathcal{D}_r^{kj}$ are generated from random initial conditions, with each agent’s state and each leader’s input drawn i.i.d.\ from $[-5,5]$. For the 1-D case, 50 simulations run for $1\,\mathrm{s}$ with $0.01\,\mathrm{s}$ sampling; for the 2-D case, 50 simulations run for $0.01\,\mathrm{s}$ with $0.001\,\mathrm{s}$ sampling, using a smaller step size due to the sensitivity of $\dot h_1^{23}$. The derivatives $\dot{h}_r^{kj}$ are obtained via central finite differences from $h_r^{kj}$, keeping only samples that satisfy \eqref{eq:ZCBFcond}. Upper and lower bounds $\underline{J}_r^{kj}$ and $\overline{J}_r^{kj}$ are estimated following \cite{jin2022data} by solving a bound-constrained optimization over all sample pairs. Because the number of constraints grows quadratically with the dataset size (i.e., $O(|\mathcal{D}|^{2})$), applying the method to all samples would lead to an intractably large problem. To reduce complexity, we cluster the data using $k$-means and retain only the points closest to each cluster center as representative samples. For each case, \eqref{eq:cbfOptProblem} is formulated in \texttt{cvxpy} and solved with OSQP. Leaders use the nominal P controller
\begin{equation}
\bs{u}^{\mathrm{nom}}_i(\bs{x})=-k_p\big(\bs{x}_i-d_i^{\mathrm{des}}\big), \qquad i\in V_l,
\end{equation}
where $d_i^{\mathrm{des}}$ is the desired leader position.
\subsection{Case A: 1D Leader–Follower}\label{sec:1d}
We consider a one-dimensional leader–follower network with four agents and communication graph
$G=(V,E,A)$, where $V=\{0,1,2,3\}$, $E=\{(0,1),(0,2),(2,3),(0,3)\}$, with leaders
$V_l=\{0,3\}$ and followers $V_f=\{1,2\}$. All edges have a desired relative displacement of $d_{ij}^{\mathrm{des}} = 1$, and the leaders’ individual targets are $d_0^{\mathrm{des}} = 1$ and $d_3^{\mathrm{des}} = 5$. The nominal proportional gain is set to $k_p = 15$, which would make the system unsafe without the CBF framework.

Simulations start from the initial state $\bs{x} = [-0.5,\,-1.0,\,1.5,\,2.0]^\top$. As shown in Fig.~\ref{fig:1_results}, safety is maintained with ZCBF values remaining strictly positive. A safety margin is kept due to the robust formulation with upper/lower Jacobian bounds. A chattering in the control inputs is observed, just as reported in \cite{jin2023robust}. This is likely caused by the nonsmoothness of the finite data set. The effect tends to decrease as the data set size increases.




\begin{figure}[thpb]
  \centering
  \begin{subfigure}{\linewidth}
    \centering
    \includegraphics[width=\linewidth]{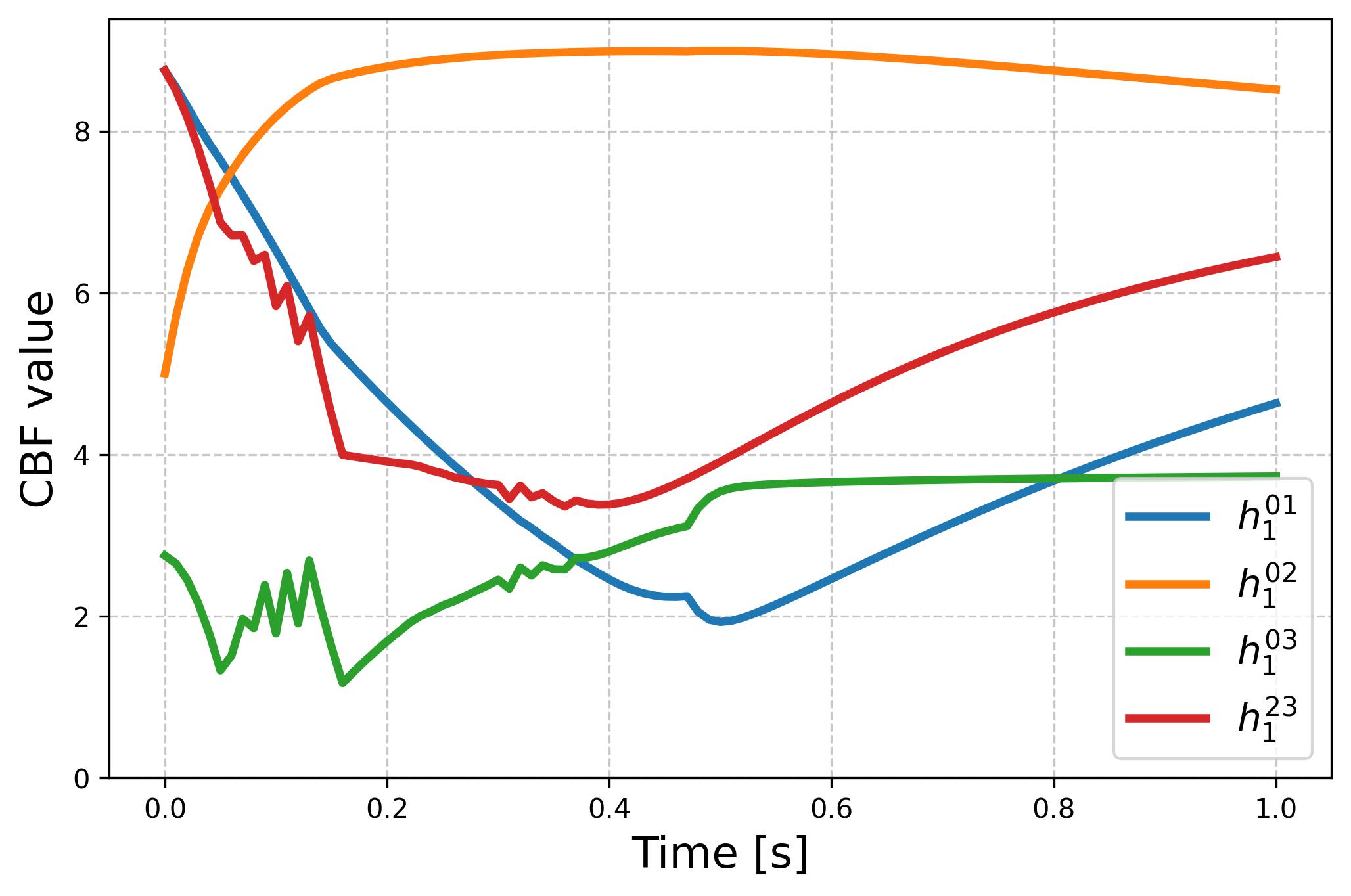}
    \caption{1-Dimensional case: ZCBF values $h^{kj}_r(t)$ over time}
  \end{subfigure}

  \vspace{0.3cm}

  \begin{subfigure}{\linewidth}
    \centering
    \hspace{-0.3cm}\includegraphics[width=1.02\linewidth]{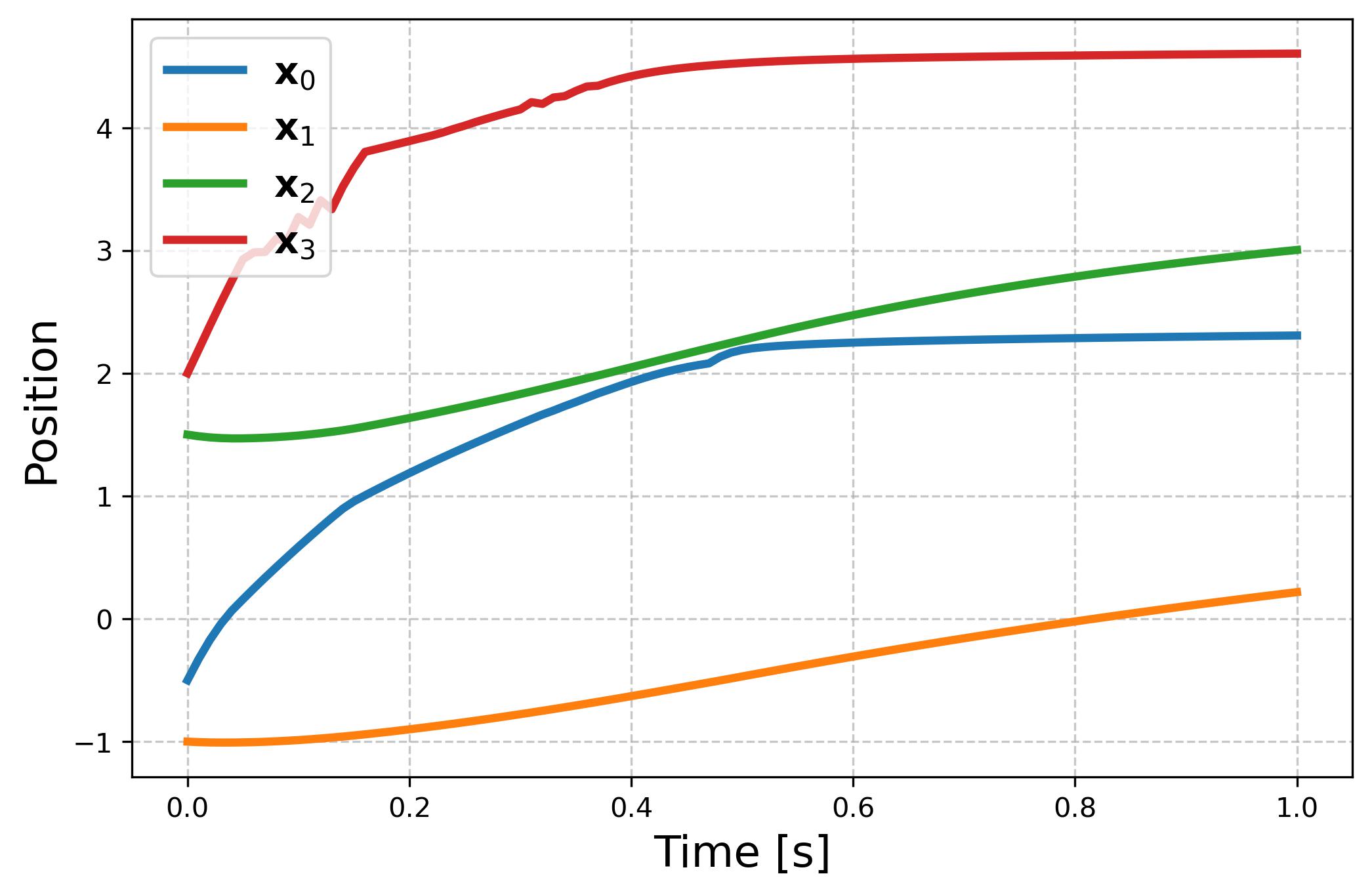}
    \caption{1-Dimensional case: Agent states and trajectories}
  \end{subfigure}

  \vspace{0.3cm}

  \begin{subfigure}{\linewidth}
    \centering
    \hspace{-0.6cm}\includegraphics[width=1.05\linewidth, height=0.67\linewidth]{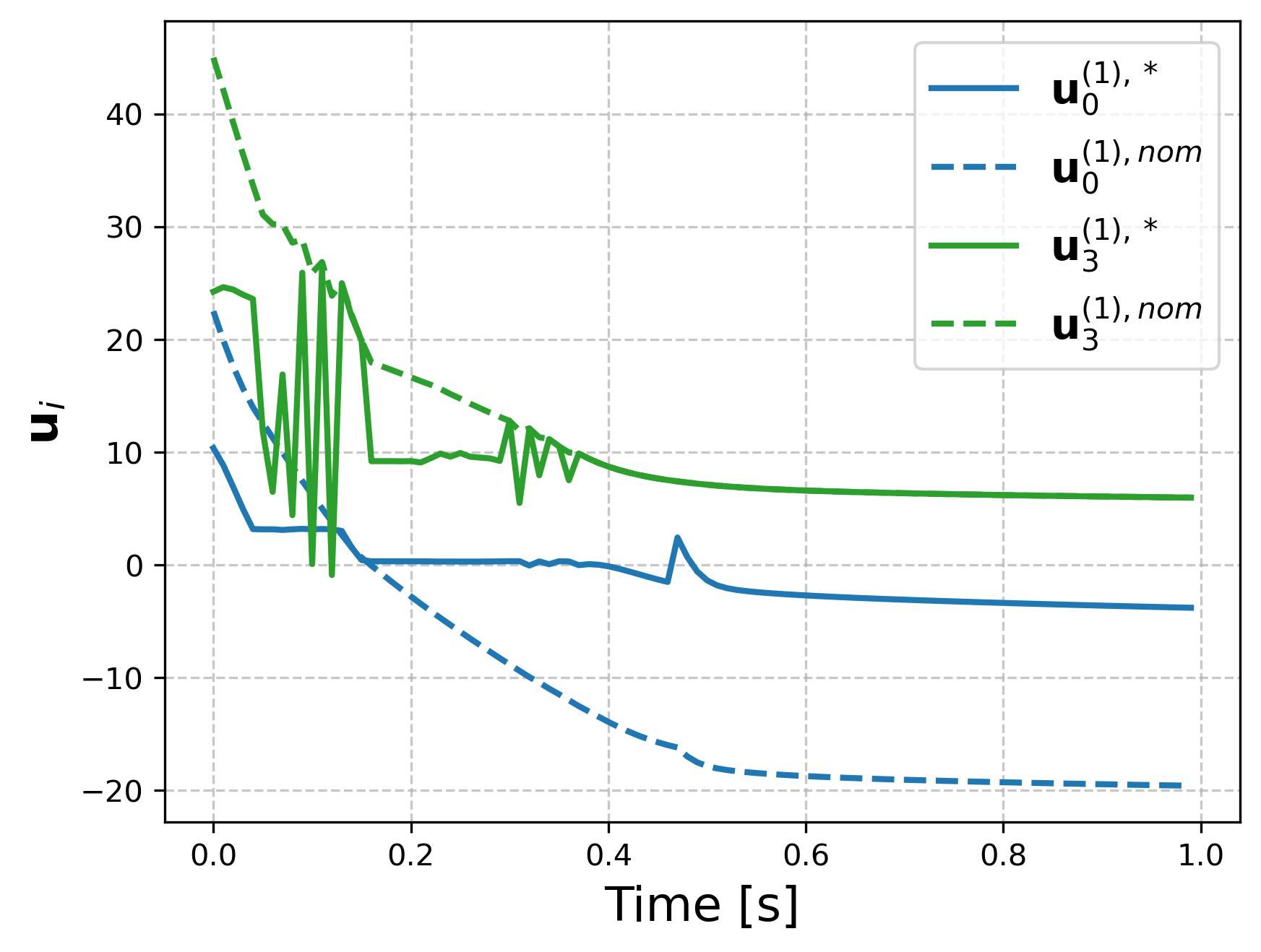}
    \caption{1-Dimensional case: Control inputs}
  \end{subfigure}

  \caption{Simulation results for 1-Dimensional leader–follower system with connectivity preservation.  Right column: 2-Dimensional system. Top row: ZCBF values $h^{kj}_r(t)$ over time, confirming forward invariance of the safe sets. Middle row: Agent states and trajectories (for 2D case, circles = start points, squares = end points). Bottom row: Control inputs: nominal (dashed) vs. 3D-ZCBF-filtered (solid), showing minimal deviation while enforcing safety. Superscripts $(1)$ and $(2)$ denote the first and
second input components, respectively.}
  \label{fig:1_results}
\end{figure}

\subsection{Case B: 2D Leader–Follower}
We consider a planar leader–follower system with four agents and a communication graph
$G=(V,E,A)$, where $V=\{0,1,2,3\}$, $E=\{(0,1),(1,2),(2,3)\}$, with leaders
$0,3\in V_l$ and followers $1,2\in V_f$. Desired relative positions are
$d_{01}^{\mathrm{des}}=[1,\,2]^\top$, $d_{12}^{\mathrm{des}}=[2,\,1]^\top$, and
$d_{23}^{\mathrm{des}}=[1,\,1]^\top$. Leader desired pose are
$d_{0}^{\mathrm{des}}=[1,\,1]^\top$ and $d_{3}^{\mathrm{des}}=[5,\,5]^\top$,
with a nominal P gain $k_p=10$. The system is simulated from initial conditions
$\bs{x}(0)=[0.0,\,0.0,\,0.5,\,0.5,\,1.0,\,1.0,\,1.5,\,1.5]^\top$.
Figure~\ref{fig:2_results} shows that safety is maintained, as all ZCBF values
remain positive. The control input $\bs{u}_3^*$ deviates from the nominal
$\bs{u}^{\mathrm{nom}}_3$ during the first $2\,\mathrm{s}$ to enforce
$h_1^{23}>0$.

\begin{figure}[thpb]
  \centering
  \begin{subfigure}{\linewidth}
    \centering
    \includegraphics[width=0.98\linewidth]{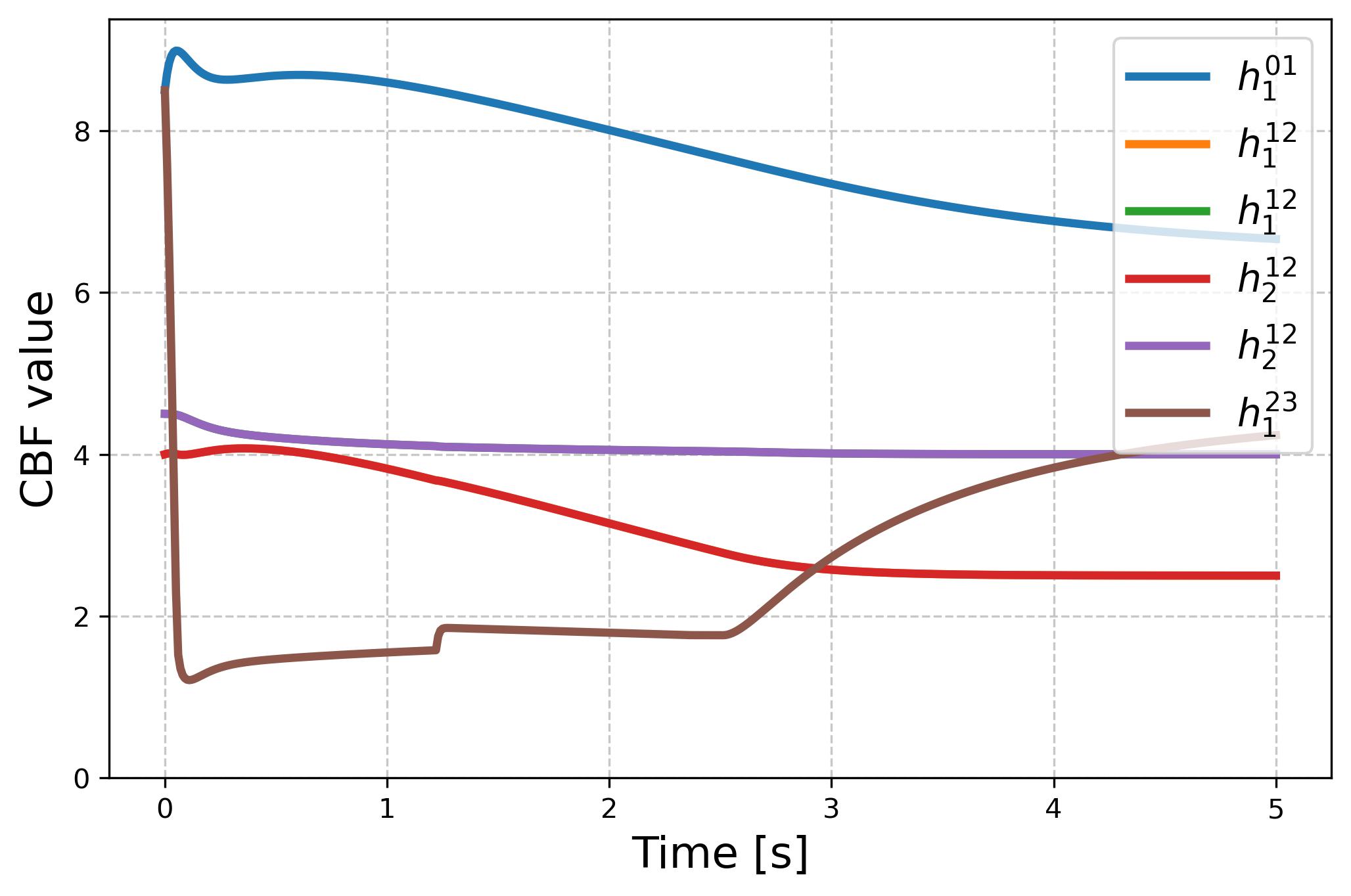}
    \caption{2-Dimensional case: ZCBF values $h^{kj}_r(t)$ over time}
  \end{subfigure}

  \vspace{0.3cm}

  \begin{subfigure}{\linewidth}
    \centering
    \hspace{-0.3cm}\includegraphics[width=1.02\linewidth, height=0.67\linewidth]{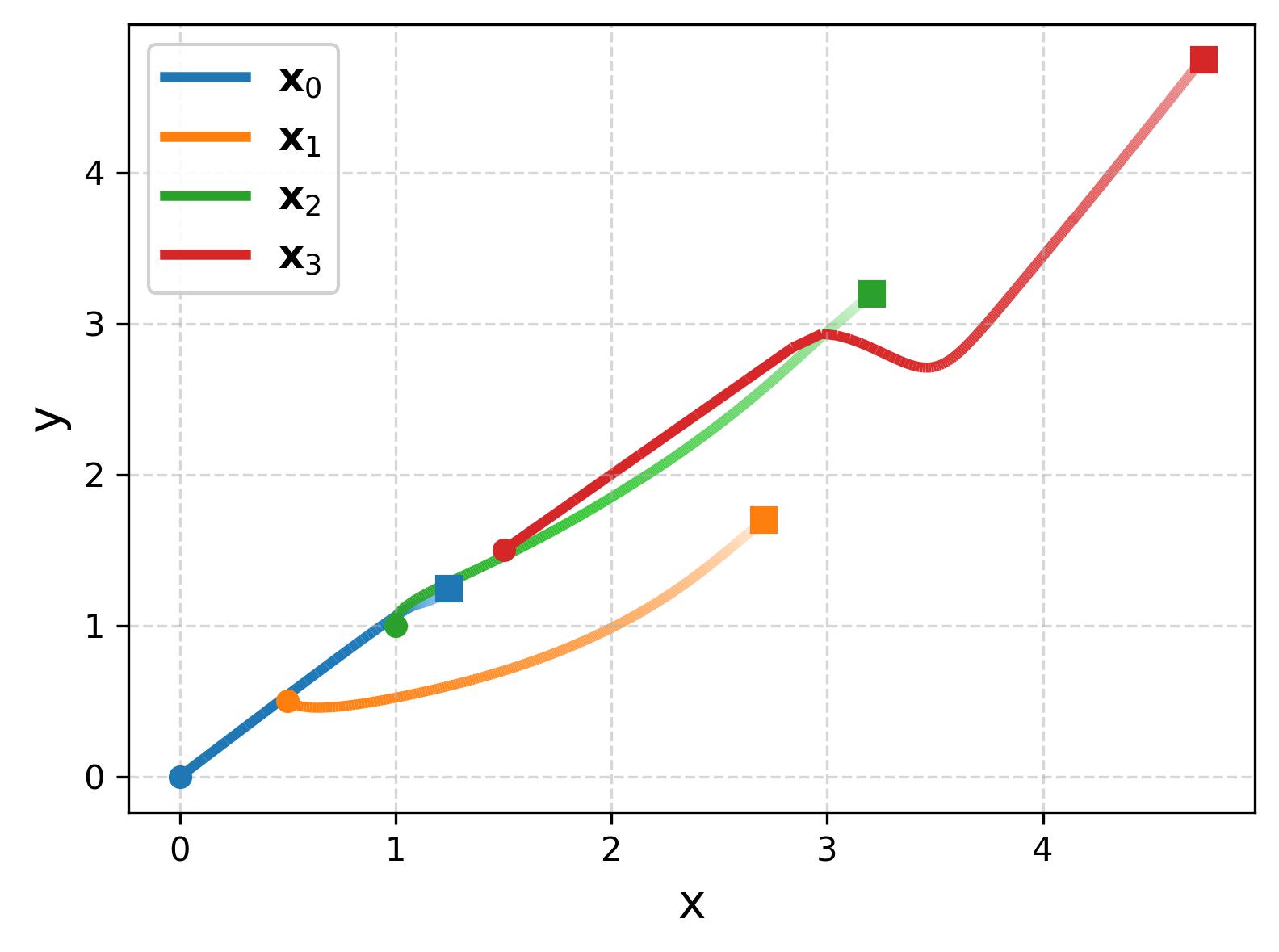}
    \caption{2-Dimensional case: Agent states and trajectories}
  \end{subfigure}

  \vspace{0.3cm}

  \begin{subfigure}{\linewidth}
    \centering
    \hspace{-0.5cm}\includegraphics[width=1.03\linewidth, height=0.67\linewidth]{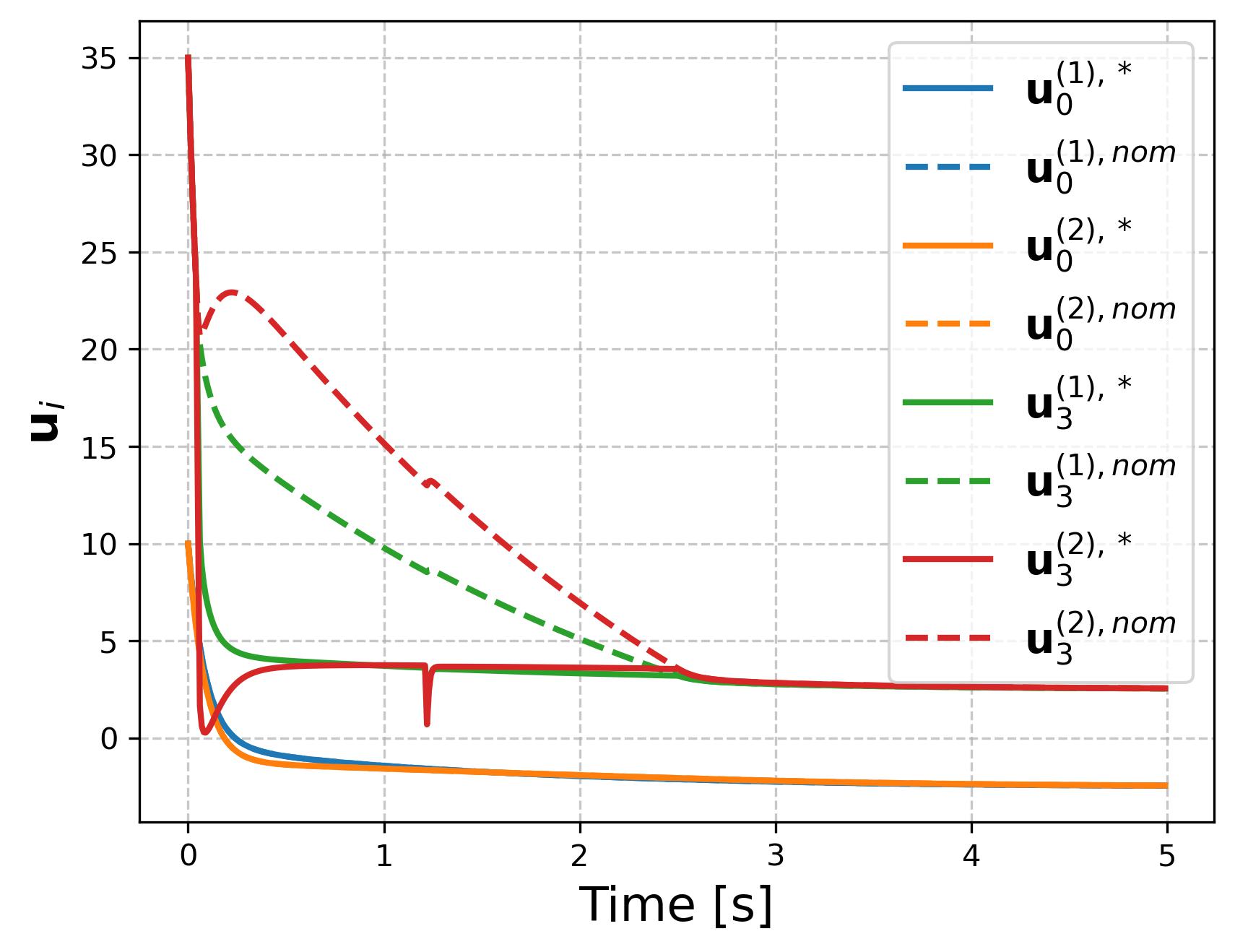}
    \caption{2-Dimensional case: Control inputs}
  \end{subfigure}

  \caption{Simulation results for 2-Dimensional leader–follower system with connectivity preservation. Top: ZCBF values $h^{kj}_r(t)$ over time, confirming forward invariance of the safe sets. Middle: Agent states and trajectories (for 2D case, circles = start points, squares = end points). Bottom: Control inputs: nominal (dashed) vs. 3D-ZCBF-filtered (solid), showing minimal deviation while enforcing safety. Superscripts $(1)$ and $(2)$ denote the first and
second input components, respectively.}
  \label{fig:2_results}
\end{figure}

\subsection{Effect of Dataset Size and Jacobian Bounds}

We investigate the effect of the amount of data used for Jacobian estimation and the choice of Jacobian bounds on the performance and safety of the 3D-ZCBF controller, using the 1-D leader–follower system with the same parameters as in Section~\ref{sec:1d}. Specifically, we vary the number of simulations used to generate the datasets $\mathcal{D}_r^{kj}$, which in turn determines the dataset size, and we additionally scale the estimated Jacobian bounds. For each configuration, 10 closed-loop simulations are run from the same initial condition, each using a different dataset generated from independent simulations. The datasets $\mathcal{D}_r^{kj}$ are generated from random initial conditions, with each agent’s state and each leader’s input drawn i.i.d.\ from $[-5,5]$ for each simulation. Performance and safety are assessed using three metrics: (i) the average control cost across the 10 simulations, defined as half the squared 2-norm of the deviation from the nominal controller (see~\eqref{eq:cbfOptProblem}); (ii) the minimum ZCBF value $h_r^{kj}(t)$ attained over all agents after 0.1 seconds, averaged across the simulations when no constraint violations occur; and (iii) the number of time instants at which any ZCBF constraint is violated, i.e., $h_r^{kj}(t)<0$, with the number of simulations in which violations occur reported in parentheses. We exclude the first 0.1 seconds from the minimum ZCBF calculation because initial conditions can result in temporarily low ZCBF values that do not reflect the safety margin maintained during normal operation.

\begin{table}[htbp]
\centering
\caption{Average simulation results for different numbers of simulations used to generate the datasets and for variations in Jacobian bounds. The $ 0.5 \times$ and $ 2 \times$  indicate Jacobian bounds that are learned from data scaled by half and double, respectively. Reported are the average control cost (half the squared 2-norm deviation from the nominal controller) and the average minimum ZCBF value $h_r^{kj}$ across all agents  (excluding the first 0.1 seconds), and the total number of constraint violations  ($h_r^{kj}(t)<0$) out of $1000$ instances accross all simulations, with the number of simulations in which violations occurred indicated in parentheses.}
\label{tab:average_results}
\begin{tabular}{lccc}
\toprule
Configuration & Control Cost & Min   $h_r^{kj}$ &  Violations \\
\midrule
5 sims & $8.54$ & $2.11$ & 59 (1) \\
50 sims & $6.49$ & $0.99$ & 34 (5)\\
500 sims & $6.32$ & $0.17$ & 56 (7)\\
50 sims, $0.5 \times$ bounds & $0.77$ & - & 734 (10) \\
50 sims, $2 \times$ bounds & $7.21$ & $2.96$ & 0 (0) \\
500 sims, $2 \times$ bounds & $6.85$ & $3.02$ & 0 (0) \\
\bottomrule
\end{tabular}
\end{table}

The results in Table~\ref{tab:average_results} show that increasing the dataset size reduces control cost and reduces the minimum ZCBF value $h_r^{kj}$ attained, meaning the controller becomes less conservative with increasing dataset as expected, since larger data covers more space and gives better estimates of $\dot{h}_r^{kj}$.

The total number of ZCBF constraint violations does not increase monotonically with the data set size. In particular, violations are observed even for the smallest dataset, constructed from 5 simulations, and the total number of violations for datasets generated from 5, 50, and 500 simulations remains in similar amountts. However, the frequency across simulations increases with dataset size: violations occur in only 1 out of 10 simulations when the dataset is generated from 5 simulations, compared to 5 and 7 simulations when the datasets are generated from 50 and 500 simulations, respectively. This indicates that while the overall number of violations does not grow with increasing data, violations occur more consistently as the controller becomes less conservative with larger datasets.

The single simulation exhibiting violations for the dataset generated from 5 simulations can be attributed to limited data availability.  With such a small dataset, the available data may provide insufficient coverage around the current state, rendering the ZCBF constraints infeasible and preventing the controller from identifying any admissible control input. As a result, the system state remains in the unsafe region for an extended period, which leads to a large number of constraint violations within that simulation.

The choice of Jacobian bounds significantly affects controller conservatism. Halving the learned bounds results in insufficient safety margins, leading to frequent constraint violations and negative ZCBF values. In contrast, doubling the bounds yields a more conservative controller, increasing both the control cost and the minimum attained ZCBF values, while fully eliminating constraint violations.

We further investigated the effect of dataset size on the chattering of the control inputs. Figure~\ref{fig:smooth} shows representative input trajectories obtained using datasets $\mathcal{D}_r^{kj}$ generated from 5, 50, and 500 data-collection simulations. As the dataset size increases, the chattering in the control input is progressively reduced, and the trajectories become smoother, reflecting improved controller performance with more comprehensive data coverage.

\begin{figure}[thpb]
  \centering
  \begin{subfigure}{\linewidth}
    \centering
    \includegraphics[width=0.98\linewidth]{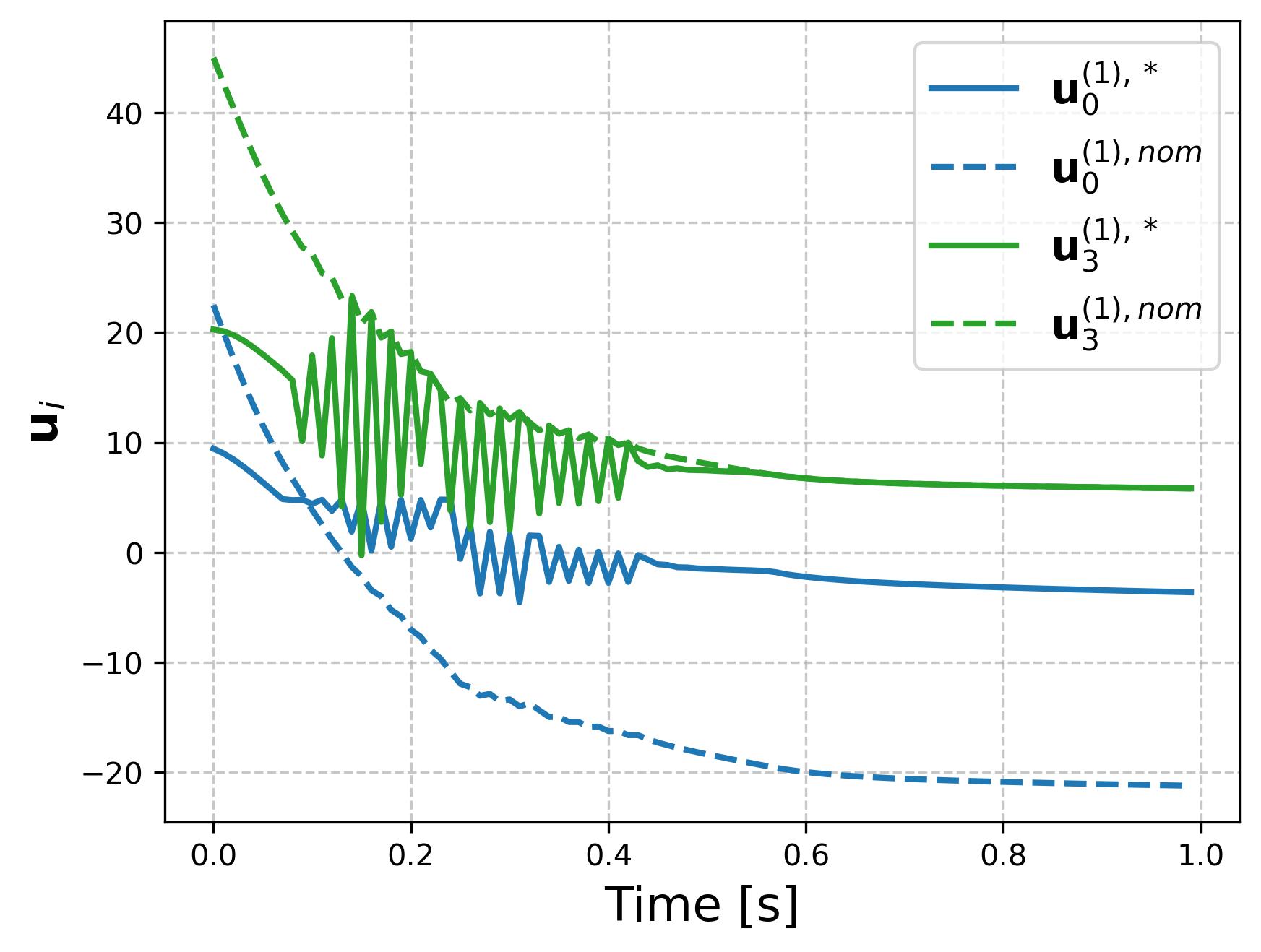}
    \caption{Control inputs using a dataset generated from 5 simulations}
  \end{subfigure}

  \vspace{0.3cm}

  \begin{subfigure}{\linewidth}
    \centering
    \hspace{-0.3cm}\includegraphics[width=1.02\linewidth, height=0.67\linewidth]{figures/control_edges_01-02-23-03_d_1-1-1-1.jpeg}
    \caption{Control inputs using a dataset generated from 50 simulations}
  \end{subfigure}

  \vspace{0.3cm}

  \begin{subfigure}{\linewidth}
    \centering
    \hspace{-0.5cm}\includegraphics[width=1.03\linewidth, height=0.67\linewidth]{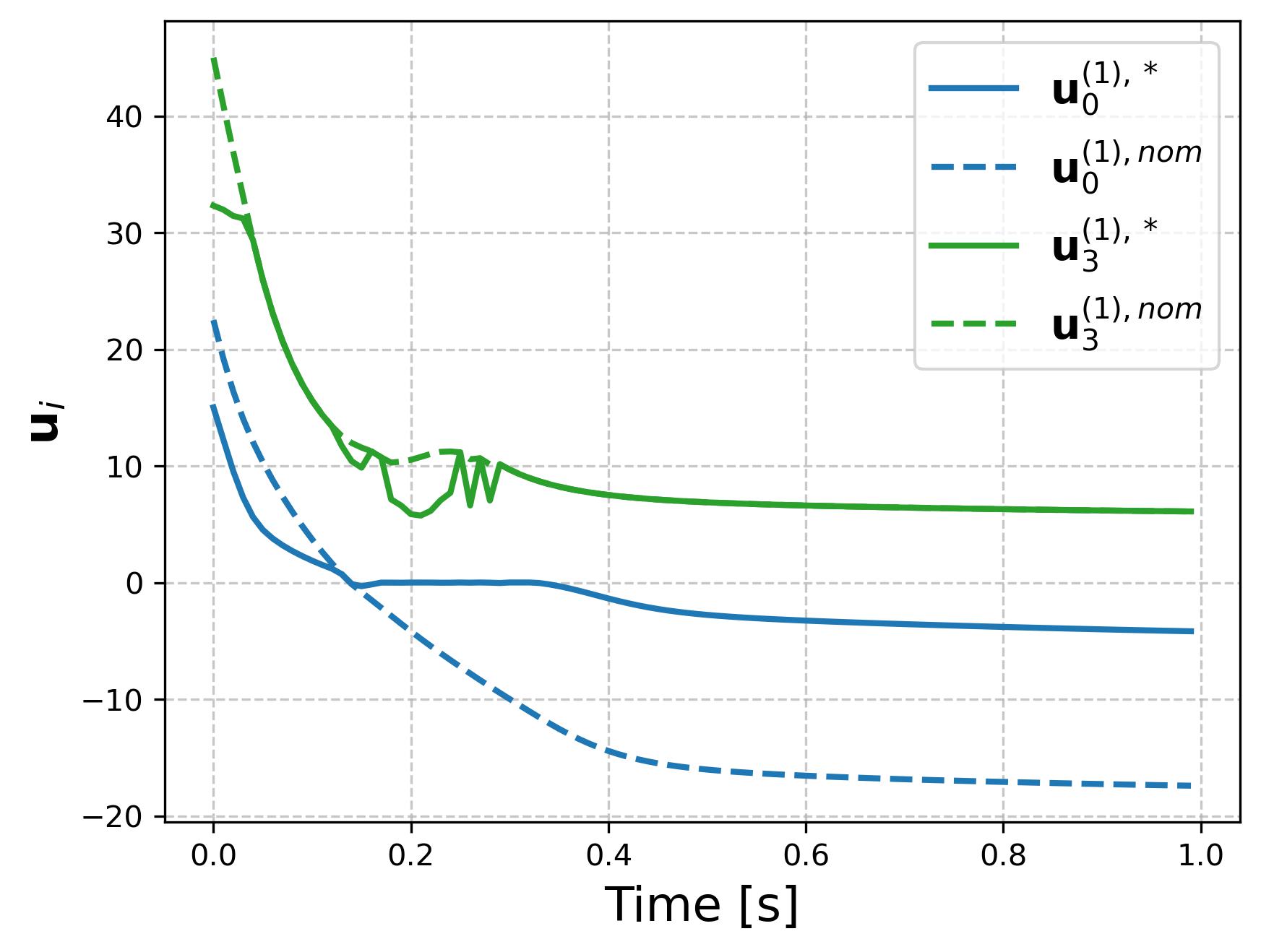}
    \caption{Control inputs using a dataset generated from 500 simulations}
  \end{subfigure}
  \caption{Representative control input trajectories for datasets generated from 5, 50, and 500 simulations. As the dataset size increases, the control inputs become  smoother and exhibit reduced chattering.}
  \label{fig:smooth}
\end{figure}

\section{Conclusion}\label{sec:conclusion}
This paper presented a complete framework 3D-ZCBFs to maintain connectivity in leader–follower multi-agent systems with unknown dynamics. By identifying bounds on CBF time derivatives directly from input–state data, the framework ensures the controlled invariance of safety sets without requiring explicit system models. We derived the explicit, decoupled safety conditions necessary for maintaining connectivity across leader–leader, and follower–follower pairings, and established system-wide conditions to guarantee network-level connectivity. Furthermore, we provided a quantitative analysis of how data set size and the resulting tightness of Jacobian bounds affect the feasibility and performance of the safety certificates. Simulation results evaluated these conditions and characterized the  effects of data set sizes and  Jacobian bounds. The results showed that while larger datasets reduce control cost, safety violations can still occur due to the inherent inaccuracy of learned bounds. We demonstrated that doubling the Jacobian bounds can eliminate these violations, albeit at the cost of increased controller conservatism. These findings highlight that the reliability of data-driven safety certificates is fundamentally tied to the robustness of the learned derivative bounds. Future work will focus on extending this data-driven framework to cases where the CBFs themselves are not known explicitly.
\bibliographystyle{IEEEtran}
\bibliography{Ref.bib}

\end{document}

%% file: figures/ff.tikz
\begin{tikzpicture}[
  >=Stealth,
  scale=1.2,
  show background rectangle,
  background rectangle/.style={fill=gray!5, rounded corners=2mm}
]

\tikzset{
    leader/.style={circle, draw, thick, fill=gray!25, minimum size=6mm},
    follower/.style={circle, draw, thick, fill=blue!20, minimum size=6mm},
    projection/.style={circle, draw, fill=gray!20, minimum size=1mm, inner sep=0.2mm},
vec/.style={->, line width=1.1pt},
para/.style={->, dashed, line width=0.9pt},
perp/.style={->, dotted, line width=0.9pt},
}

\node[leader] (lk) at (0,-1) {$\ell_k$}; 
\node[leader] (lj) at (5,-1) {$\ell_j$};

\node[follower] (k) at (1.5,3.2) {$k$};
\node[follower] (j) at (2.8,1.9) {$j$};

\pgfmathsetmacro{\dmax}{2.9} 

\draw[vec] (lj) -- (lk)
    node[midway, below] {$\bs{\bar x}_{\ell_k\ell_j}$};

\draw[vec] (j) -- (k)
    node[midway, right] {$\bs{\bar x}_{kj}$};

\coordinate (kj) at ($(k)-(j)$);
\node[projection] (p) at ($(j)+(lj)!(k)!(lk)-(lj)!(j)!(lk)$) {};

\draw[para] (j) -- (p)
    node[midway, below right] {$\bs{\bar x}_{kj}^{\parallel}$};

\draw[perp] (p) -- (k)
    node[midway, left] {$\bs{\bar x}_{kj}^{\perp}$};

\draw ($(p)+(0.15,0)$) -- ++(0,0.15) -- ++(-0.15,0);
\filldraw[black] ($(p)+(0.07,0.07)$) circle (0.03);

\draw[dashed, gray] (p) circle ({\dmax/sqrt(2)});

\draw[->, gray] (p) -- ++({-\dmax/sqrt(2)*cos(-10)},{\dmax/sqrt(2)*sin(-10)}) 
    node[midway, below left] {$d_{\max}/\sqrt{2}$};

\coordinate (kperpmax) at ($(p)!1.6!(k)$);
\draw[->, thick, red!60!black] (k) -- (kperpmax)
    node[midway, right, font=\small] {$h^{kj}_2= \frac{d_{\max}^2}{2} - \|\bar x_{kj}^{\perp}\|_2^2$};

\coordinate (kparmax) at ($(p)!1.6!(j)$);
\draw[->, thick, red!60!black] (j) -- (kparmax)
    node[midway, right, xshift=2mm, font=\small] 
    {$h^{kj}_1 = \frac{d_{\max}^2}{2} - \|\bar x_{kj}^{\parallel}\|_2^2$};

\coordinate (proj) at ($(lk)!(p)!(lj)$);

\draw[perp, thin, gray!90] (k) -- ($(lk)!(p)!(lj)$);

\draw[gray!90, thin] ($(proj)+(0.15,0)$) -- ++(0,0.15) -- ++(-0.15,0);
\filldraw[gray!90] ($(proj)+(0.07,0.07)$) circle (0.03);

\end{tikzpicture}